\documentclass[acmsmall,screen,nonacm]{acmart}

\setcopyright{none}
\makeatletter
\@ifundefined{hyxmp@parse@acmart}{}{%
  \let\hyxmp@parse@acmart\relax
}
\makeatother

\usepackage{amsmath}
\usepackage{mathtools}
\usepackage{booktabs}
\usepackage{tabularx}
\usepackage{array}
\usepackage{multirow}
\usepackage{enumitem}
\usepackage{graphicx}
\usepackage{xcolor}
\usepackage{tikz}
\usepackage{url}
\usepackage{microtype}
\usepackage{algorithm}
\usepackage{algpseudocode}
\usepackage{placeins}
\usetikzlibrary{arrows.meta,positioning,fit,calc,shapes.geometric}

\hypersetup{
  hidelinks,
  pdftitle={Versioned Transitive Dependency-Closure Binding and Operation-Time Effect Governance for Agent Skills: ClosureBound},
  pdfauthor={Genliang Zhu and Chu Wang},
  pdfsubject={Version, dependency closure, and operation-time effect governance for Agent Skills},
  pdfcreator={LaTeX}
}

\setlist[itemize]{leftmargin=1.25em}
\setlist[enumerate]{leftmargin=1.45em}
\newcolumntype{L}[1]{>{\raggedright\arraybackslash}p{#1}}
\newcolumntype{Y}{>{\raggedright\arraybackslash}X}

\definecolor{opBlue}{HTML}{3B82F6}
\definecolor{opBlueFill}{HTML}{EFF6FF}
\definecolor{opBlueText}{HTML}{1E40AF}
\definecolor{opGreen}{HTML}{22C55E}
\definecolor{opGreenFill}{HTML}{F0FDF4}
\definecolor{opGreenText}{HTML}{15803D}
\definecolor{opAmber}{HTML}{FCD34D}
\definecolor{opAmberFill}{HTML}{FEF3C7}
\definecolor{opAmberText}{HTML}{92400E}
\definecolor{opRed}{HTML}{EF4444}
\definecolor{opRedFill}{HTML}{FEF2F2}
\definecolor{opRedText}{HTML}{DC2626}
\definecolor{opPurple}{HTML}{8B5CF6}
\definecolor{opPurpleFill}{HTML}{F5F3FF}
\definecolor{opPurpleText}{HTML}{7C3AED}

\newtheorem{definition}{Definition}
\newtheorem{lemma}{Lemma}
\newtheorem{theorem}{Theorem}
\newtheorem{proposition}{Proposition}

\newtheorem{assumption}{Assumption}

\newcommand{\ClosureBound}{\textsc{ClosureBound}}

\newcommand{\Norm}{\ensuremath{\mathcal N}}
\newcommand{\Graph}{\ensuremath{\mathcal G}}
\newcommand{\Nodes}{\ensuremath{\mathcal V}}
\newcommand{\Edges}{\ensuremath{\mathcal E}}
\newcommand{\Kinds}{\ensuremath{\mathcal K}}

\newcommand{\denote}[1]{[\![#1]\!]}
\newcommand{\restrict}{\mathbin{\sqsubseteq}}
\newcommand{\mcode}[1]{\ifmmode\text{\normalfont\ttfamily #1}\else{\normalfont\ttfamily #1}\fi}
\newcommand{\statementheading}[1]{%
  \par\smallskip\noindent\textbf{#1.}\par\nobreak\noindent}

\title[Versioned Closure and Effect Governance for Agent Skills]
{Versioned Transitive Dependency-Closure Binding and Operation-Time Effect
Governance for Agent Skills: \ClosureBound}

\author{Genliang Zhu}
\affiliation{%
  \institution{Accentrust}
  \city{Vancouver}
  \country{Canada}
}
\affiliation{%
  \institution{Georgia Institute of Technology}
  \city{Atlanta}
  \state{Georgia}
  \country{USA}
}
\email{research@accentrust.com}

\author{Chu Wang}
\affiliation{%
  \institution{Accentrust}
  \city{Vancouver}
  \country{Canada}
}
\affiliation{%
  \institution{University of Illinois Urbana-Champaign}
  \city{Urbana}
  \state{Illinois}
  \country{USA}
}

\begin{abstract}
Agent Skills combine instructions with files, packages, tools, models, and
services, so operational identity can exceed a signed directory. Recursive or
lazy dependencies may change while root-level evidence remains valid, and
different surfaces may reach the same durable effect.

We present \ClosureBound, a reference monitor that prevents authorization
transfer across material changes to this heterogeneous closure. Its resolver
commits typed graph nodes and topology. Each grant binds an exact closure root,
effect ceiling, purpose/provenance, validity, and epochs. At durability, it re-resolves
closure and state, normalizes the operation into an external-effect IR, and
admits it only if a joint witness satisfies every bound. Supported equivalent
paths share one ceiling.

Assuming complete mediation and discovery, authenticated freshness, sound
normalization, cryptographic binding, and authoritative linearization, we
establish metadata non-authority, closure determinism, version non-inheritance,
effect non-amplification, bound-value freshness, and path invariance. We do not
establish program equivalence or remote-service honesty. A provider-free
implementation matches 40 frozen lifecycle fixtures; 18 kernel contracts and
six mutants cover binding and downgrade cases. Full-profile exploration reaches
84,608 states and 530,752 transitions without a declared invariant violation;
six weakened profiles yield witnesses. A lexical audit of 549 public Skills
(4,872 unique files) finds that 21 of 526 roots with bundled files name every
non-manifest path verbatim, 67 contain links resolving outside their roots, and
no root declares a frontmatter dependencies field. These observations motivate
conservative closure discovery and define concrete targets for broader runtime,
interoperability, efficacy, and production validation.
\end{abstract}

\ccsdesc[500]{Security and privacy~Software security engineering}
\ccsdesc[500]{Security and privacy~Access control}
\ccsdesc[300]{Software and its engineering~Software configuration management and version control systems}
\keywords{AI agents, Agent Skills, dependency closure, supply-chain security,
runtime enforcement, operation-time authorization, effect normalization}

\begin{document}
\maketitle
\hypersetup{pdfcreator={LaTeX}}

\section{Introduction}
\label{sec:introduction}

Agent Skills are becoming a portable unit of procedural capability. The open
format defines a directory with a required \mcode{SKILL.md} file and optional
scripts, references, assets, and arbitrary additional resources; clients
progressively load metadata, instructions, and resources as a task requires
them~\cite{agentskills2026spec,agentskills2026client}. This simplicity is useful:
one Skill can teach an agent how to reconcile invoices, deploy software, fill a
form, or operate a domain-specific tool without retraining the model. It also
means that a capability is assembled over time from natural-language
instructions, executable artifacts, the host agent, and external services.

The installed directory is not the whole capability. A script may resolve a
package at invocation; a reference may direct the model to a helper Skill; a
tool name may resolve to a newer schema; an MCP endpoint may keep the same URL
while its deployment changes; and a Skill may ask the agent to reproduce an
effect through shell or browser automation when the direct tool is unavailable.
Recent measurement confirms that Skill supply chains span mixed
Skill--package--service graphs and that recursive reuse hides material
inventory~\cite{jia2026islands}. Recent attacks likewise exploit metadata,
natural-language compliance rules, embedded models, and apparently routine
side effects rather than only explicit malicious binaries
~\cite{saha2026hood,liu2026payloadless,tie2026badskill,zhuang2026agenttrap}.

\paragraph{The authorization-transfer bug.}
Suppose a user authorizes Skill \mcode{invoice-pay@1.3} to submit one payment to
vendor \(v\), using helper Skill \mcode{bank-wire@2.1}, package
\mcode{iban@4.0}, and service contract \mcode{bank.example/payments@e17}. The
root directory and its signature remain unchanged, but a floating helper
resolves to \mcode{bank-wire@2.2}; the new helper calls an analytics service
and changes the recipient before submission. Alternatively, every artifact is
unchanged but the agent reaches the same payment through a shell client whose
effect is not classified as ``payment.'' In both cases the old authorization is
being transferred to a different operational capability.

Supply-chain attestations answer important questions about where artifacts
came from and whether they changed~\cite{cappos2010tuf,
torresarias2019intoto,slsa2026}. Skill-specific systems add scanning,
permission manifests, versioned dependency resolution and lockfiles, verified
loading, least privilege, behavioral specifications, and runtime probes
~\cite{liu2026wild,pan2026skillguard,shen2026sigil,
wu2026skillscope,li2026vigil,lan2026runtime,bhardwaj2026skillfortify,
dantanarayana2026sigilcompiler,zhan2026autopolicy}.
AIRGuard normalizes heterogeneous actions for action-time authority control;
Cordon and Atomix stage effects and validate or order them at transaction/commit
boundaries; CommitGuard binds durable effects to fresh authority evidence
~\cite{qin2026airguard,chen2026cordon,mohammadi2026atomix,
santos2026temporary}. These contributions establish adjacent foundations.
ClosureBound addresses the following narrower systems question:

\begin{quote}
\emph{What exact transitive capability was authorized, and is that same
capability---not merely the same root name---producing this same normalized
effect now?}
\end{quote}

\paragraph{Approach.}
\ClosureBound resolves an activated Skill into a typed graph whose nodes cover
instructions, bundled files, recursively reused Skills, packages, tool schemas,
model artifacts, services, and effect contracts. Its commitment includes
canonical node evidence and typed edge topology. A task grant binds exactly one
closure root together with purpose/provenance restrictions and an effect
ceiling. Preparation produces only a short-lived reservation. Immediately
before durability, the monitor re-resolves every material identity and epoch,
normalizes the real surface event, checks the current pre-state, and atomically
consumes the reservation. A change leads to denial or explicit
reauthorization; uncertainty after dispatch remains indeterminate.

\begin{figure}[t]
\centering
\begin{tikzpicture}[
  font=\small,
  >=Latex,
  node distance=5mm and 6mm,
  box/.style={rounded corners=2pt, draw, minimum height=8mm, align=center,
              inner xsep=5pt, inner ysep=3pt},
  flow/.style={->, line width=0.7pt},
  bad/.style={->, line width=0.9pt, opRed, dashed}
]
\node[box,draw=opBlue,fill=opBlueFill,text=opBlueText] (grant)
  {task grant\\root Skill \(s@v\)};
\node[box,draw=opPurple,fill=opPurpleFill,text=opPurpleText,right=of grant] (load)
  {load-time\\root check};
\node[box,draw=opAmber,fill=opAmberFill,text=opAmberText,right=of load] (deps)
  {lazy helper/package/\\service update};
\node[box,draw=opGreen,fill=opGreenFill,text=opGreenText,below=9mm of deps] (effect)
  {durable effect\\through path \(p_2\)};
\node[box,draw=opRed,fill=opRedFill,text=opRedText,left=of effect] (gap)
  {old authority\\silently reused};
\draw[flow,opBlue] (grant) -- (load);
\draw[flow,opPurple] (load) -- (deps);
\draw[bad] (deps) -- (effect);
\draw[bad] (effect) -- (gap);
\coordinate (return) at (grant.south |- gap.west);
\draw[bad] (gap.west) --
  node[midway,below=1pt,align=center,text=opRedText,font=\footnotesize]
  {root identity\\still matches} (return) -- (grant.south);
\end{tikzpicture}
\caption{The authorization-transfer gap. Load-time integrity of a root Skill
does not bind a later transitive resolution or an equivalent effect path.
\ClosureBound rebinds the exact closure and normalized effect at operation
time. Blue denotes trusted authority, purple identity, amber change, green the
effect boundary, and red a rejected transfer.}
\Description{A loop from a task grant through a load-time root check, a lazy
dependency update, a durable effect, and silent reuse of old authority.}
\label{fig:gap}
\end{figure}

\paragraph{Contributions.}

\begin{enumerate}
  \item We define a root- and endpoint-sensitive commitment for the
  authenticated heterogeneous \emph{operational} closure of an activated Skill,
  spanning instructions, files, packages, tools, models, services, and effect
  contracts. Our contribution is the authorization unit---the exact mixed
  closure used by one mediated effect---rather than a new dependency graph,
  lockfile, or SBOM.

  \item We introduce a joint operation-time witness that binds a trusted task
  grant, exact closure root, current runtime state, purpose/provenance boundary,
  and normalized external effect. The witness establishes authorization
  consistency; Skill benignness remains a separate content-analysis question.

  \item We prove, under explicit assumptions, metadata non-authority, closure
  determinism, version non-inheritance, transitive effect non-amplification,
  operation-time TOCTOU safety, and equivalent-effect path invariance. Every
  theorem states its assumptions and claim boundary.

  \item We provide deterministic algorithms for closure resolution and
  prepare/commit monitoring, together with a typed reauthorization-diff schema
  and fail-closed handling of ambiguity and indeterminate settlement.

  \item We implement and evaluate a provider-free finite reference model on 40
  frozen lifecycle fixtures, five weakened profiles, six executable mutants,
  bounded exploration, and a single-process local cost measurement. This
  evaluation targets mechanism conformance, necessity, finite-state safety, and
  local kernel cost; cross-runtime efficacy and ecosystem coverage require
  separate studies.

  \item We conduct a separately specified static closure audit over 549 public
  Skills from two fixed, independently maintained repository snapshots. It
  characterizes local directory closure and explicit path structure as a fixed
  baseline for runtime-discovery and interoperability studies.
\end{enumerate}

\section{Background and Problem Analysis}
\label{sec:background}

\subsection{A Skill is a progressively resolved capability}

The Agent Skills specification requires only a name and description in
\mcode{SKILL.md}; an experimental \mcode{allowed-tools} field and arbitrary
string-valued metadata are optional. The August 2026 clarification also makes
explicit that the conventional optional directories are non-exhaustive. The
Markdown body has no policy syntax restriction, and scripts are expected to
document or resolve their own dependencies~\cite{agentskills2026spec}. Client
guidance recommends progressive disclosure:
catalog metadata is available first, full instructions are loaded on
activation, and referenced resources are loaded only when required
~\cite{agentskills2026client}. Therefore, the complete operational dependency
set need not exist at discovery or even at activation.

We distinguish:

\begin{description}[leftmargin=1.2em,style=nextline]
  \item[Discovery identity] the untrusted metadata used to retrieve or select a
  Skill.
  \item[Distribution identity] the files, signatures, registry records, and
  supply-chain attestations available before execution.
  \item[Resolved capability identity] the exact material transitive graph that
  can implement a protected effect in one execution.
  \item[Effect identity] the canonical external state transition at the
  durability boundary, independent of the surface used to request it.
\end{description}

Conflating these identities creates two dual errors. A false negative accepts a
changed dependency because the root remains unchanged. A false positive blocks
benign change in discovery order, resolver-local handles, or canonical path
spelling even though the canonical evidence graph is identical. ClosureBound
prioritizes sound authorization: unresolved ambiguity denies, while
canonicalization and representation-equivalent commitments avoid gratuitous
invalidation where the admitted evidence encoding is identical.

\subsection{Why flat manifests and load-time checks are insufficient}

A flat manifest hash can detect a changed listed item, but three structural
problems remain.

\paragraph{Completeness.}
If the manifest omits a nested Skill, lazy package, model file, service
deployment, or effect adapter, hashing the manifest authenticates an incomplete
statement. SPDX explicitly distinguishes asserted relationships from
\mcode{NOASSERTION}; SLSA makes provenance completeness a property of the
producer and notes recursive dependency verification as an additional
step~\cite{spdx2025,slsa2026verify}. ClosureBound carries this distinction into
authorization: unknown material dependencies do not disappear.

\paragraph{Topology.}
An unordered inventory cannot distinguish ``Skill \(A\) invokes package \(P\)
only through sandboxed helper \(B\)'' from ``\(A\) invokes \(P\) directly.''
Edge kinds affect mediation and the effect contract. The commitment must bind
the typed graph, not only component digests.

\paragraph{Time.}
Loading precedes lazy resolution and the real effect. SIGIL correctly binds an
approved artifact to the loaded artifact~\cite{shen2026sigil}; operation-time
authorization correctly rechecks authority at durability
~\cite{santos2026temporary}. ClosureBound composes these insights: the
operation-time witness includes the exact current recursive closure.

\subsection{Why tool-call syntax is not an effect boundary}

One external transition may be requested as a typed API call, generated SDK
code, a shell command, browser interaction, database query, or a helper Skill.
Conversely, the same function name can represent different effects after a
schema or service change. A policy attached only to surface syntax is therefore
neither complete nor path invariant.

ClosureBound does not solve arbitrary program equivalence. Each protected sink
has a trusted adapter that emits a finite effect IR. For a payment, relevant
fields may include payer, payee, currency, amount, funding source,
idempotency scope, and expected ledger state. For a message, they may include
channel, recipients, data classes, and release semantics. Unsupported general
code is denied or confined to a conservative effect ceiling.

\subsection{Research scope}

Table~\ref{tab:scope} separates the central claim from adjacent questions.

\begin{table}[t]
\caption{Research scope and evidence boundary.}
\label{tab:scope}
\small
\begin{tabularx}{\linewidth}{@{}L{0.22\linewidth}YY@{}}
\toprule
Question & In scope & Explicitly out of scope \\
\midrule
Skill identity &
Canonical represented transitive closure and topology &
Complete semantic dependency discovery for arbitrary prose/code \\
Authorization &
Exact closure/effect joint witness under a trusted task grant &
Inferring latent user intent or proving the grant is normatively correct \\
Runtime &
Mediated prepare/commit and current epochs &
Unmediated sinks, compromised monitor, Byzantine infrastructure \\
Effects &
Finite adapter-defined external-effect IR &
General program equivalence, receiver honesty, guaranteed rollback \\
Evidence &
Finite conformance, mutants, bounded model, local kernel cost &
Hosted-model efficacy, ecosystem prevalence, production SLOs \\
\bottomrule
\end{tabularx}
\end{table}

\section{System and Threat Model}
\label{sec:threat}

\subsection{Entities and lifecycle}

The principals are a user or organizational policy authority, an agent runtime,
Skill publishers, package registries, tool/service operators, and external
effect receivers. The lifecycle is:

\[
\mathsf{discover}\rightarrow\mathsf{activate}\rightarrow
\mathsf{resolve}\rightarrow\mathsf{grant}\rightarrow
\mathsf{prepare}\rightarrow\mathsf{commit}\rightarrow
\mathsf{settle}.
\]

Discovery and Skill-authored metadata are untrusted inputs. Distribution
signatures and supply-chain attestations are authenticated evidence, but do not
alone authorize effects. A grant issuer independently combines the current
task certificate with organizational policy. The resolver and monitor form the
enforcement core.

\subsection{Adversary}

The adversary may control Skill metadata, instructions, bundled files, nested
Skills, packages, model artifacts, self-declared permissions, claimed
dependencies, and proposed effects. It may exploit floating versions,
dependency confusion, alternate locators, symlinks, cycles, caches, lazy
loading, service redeployment, schema drift, replay, and prepare/commit races.
It may induce the model to switch to a different mediated path that reaches an
equivalent effect.

The adversary cannot forge authenticated task grants, trusted registry or
service epochs, monitor reservations, or collision-resistant digests. It does
not compromise the canonical resolver, effect normalizers, clock,
linearization store, or policy authority. Direct effects that bypass all
configured sinks are outside the theorem rather than counted as prevented.

\subsection{Assumptions and trusted computing base}

\begin{assumption}[Complete represented dependency closure]
\label{ass:closure}
Every material object that can influence or implement a protected mediated
effect is represented as a node in the closure, or conservatively enclosed by
a stronger node such as a pinned container or attested service deployment.
\end{assumption}

\begin{assumption}[Complete effect mediation]
\label{ass:mediation}
Every protected sink crosses a trusted adapter whose normalizer is sound for
the supported surface. Unsupported or ambiguous effects fail closed.
\end{assumption}

\begin{assumption}[Authoritative freshness]
\label{ass:fresh}
Versions, service/schema/effect-contract epochs, revocation state, time, and
pre-state witnesses come from fail-closed authoritative sources rather than
solely from the Skill.
\end{assumption}

\begin{assumption}[Linearizable commit]
\label{ass:linear}
Revalidation and reservation consumption are atomic at one operation
linearization point, or the adapter consumes an equivalent current fencing
token before the protected effect.
\end{assumption}

\begin{assumption}[Cryptographic binding]
\label{ass:crypto}
The canonical hash is collision and second-preimage resistant, authenticated
statements are unforgeable, and keys/roots are administered correctly.
\end{assumption}

The TCB includes the grant issuer, canonicalization rules, resolvers and
evidence adapters, effect normalizers, version/epoch services, clock,
operation-time monitor, reservation store, and cryptographic primitives.
Remote services are trusted only to the granularity of the evidence they
provide. A stable URL does not prove an unchanged backend.

\subsection{Protected event and outcome states}

The protected event is a monitor-mediated operation reaching its configured
durability boundary. Preparation is not durability. Before dispatch, the
monitor returns a decision in

\[
\mathcal D=\{\mcode{ALLOW\_COMMIT},\mcode{DENY},
\mcode{REAUTHORIZE}\}.
\]

After dispatch, settlement is reported separately as
\(\mathcal O=\{\mcode{COMMITTED},\mcode{ABORTED},
\mcode{INDETERMINATE}\}\). A changed but legitimate closure requires a new
grant; mutating an old grant would destroy non-inheritance. If dispatch may
have occurred but the postcondition is unavailable, the settlement remains
\mcode{INDETERMINATE}.
ClosureBound does not derive actual receiver state from a timeout or an
unauthenticated receipt.

\section{Formal Model}
\label{sec:model}

\subsection{Typed dependency graph}

Let \(s\) be an activated root Skill, \(t\) a time, \(\pi\) an authenticated
evidence profile, \(c_\pi\) its separately versioned canonicalizer, and \(B\) a
trusted pre-resolution effect upper bound compiled without closure-derived
contracts. Its
resolved capability graph is

\[
\Graph_t(s;\pi,c_\pi,B)=(\Nodes_t,\Edges_t,r_s),
\]

where \(r_s\in\Nodes_t\) is the root. Node kinds are

\[
\Kinds=\{\mathsf{skill},\mathsf{instruction},\mathsf{file},
\mathsf{package},\mathsf{tool},\mathsf{model},\mathsf{service},
\mathsf{effectContract}\}.
\]

Edge kinds are
\[
\mathcal R=\{\mathsf{contains},\mathsf{reads},\mathsf{executes},
\mathsf{imports},\mathsf{invokes},\mathsf{loads},\mathsf{resolves},
\mathsf{declaresEffect}\}.
\]

A node evidence record is

\[
q_t(v)=\langle k,\ell,\delta,\chi,\zeta,\epsilon,\sigma\rangle ,
\]

where \(k\) is kind, \(\ell\) a canonical locator (including the relevant
namespace/issuer), \(\delta\) a resolved version or deployment identity,
\(\chi\) a content or contract digest, \(\zeta\) the authenticated digest of
every resolution-context fact that can change outgoing dependencies,
\(\epsilon\) an authoritative epoch, and \(\sigma\) an evidence class and
authentication proof. Resolver-local graph handles are not fields of
\(q_t\). Nodes with the same canonical record, including \(\zeta\), are
interned before commitment. A mutable service with no authenticated deployment
identity is represented only by its declared contract and URL; a policy may
reject that weak evidence class.

\[
q^E_t(u,r,v)=
\langle r,h_V(u),h_V(v),source,\epsilon_E,\sigma_E\rangle
\]

is the authenticated evidence record for an edge: its type and exact endpoint
identities, derivation source, authoritative epoch, and evidence class/proof.
The endpoint hashes are checked after both nodes have been interned; they are
not taken on faith from a manifest.

\begin{definition}[Material closure]
A node \(v\), or typed edge incident to it, is material under
\((\pi,c_\pi,B)\) if it
can alter the selection, arguments, recipient, data, sequencing,
implementation, or settlement interpretation of any \(e\in\denote B\) within
the declared mediation envelope. For a realized effect \(e\),
\(\mathcal I_{\Graph}(e)\) is the union of authenticated root-to-sink influence
paths that produced or can implement that event.
\end{definition}

The resolver can conservatively include more nodes. Omission threatens
soundness; over-approximation causes reauthorization or false denial but does
not expand authority.

\subsection{Canonical graph commitment}

Locators are normalized by a kind-specific total function \(C_k\). Relative
paths are resolved against the authenticated parent root; symlinks are
dereferenced or rejected according to policy; package coordinates include
registry namespace; service URLs use a canonical URI form; and effect
contracts have explicit schema versions.

Cycles are possible through recursive Skills or services. Let
\(\operatorname{SCC}(\Graph)\) be the deterministic condensation DAG.
For each node,

\[
h_V(v)=H(\mcode{CB:node:v1}\parallel
\operatorname{enc}(q_t(v))).
\]

For each typed edge,

\[
h_E(u,r,v)=H(\mcode{CB:edge:v1}\parallel
\operatorname{enc}(q^E_t(u,r,v))).
\]

Within each strongly connected component \(X\), define

\[
h_X=H(\mcode{CB:scc:v1}\parallel
\operatorname{sort}\{h_V(v):v\in X\}\parallel
\operatorname{sort}\{h_E(e):e\in E[X]\}).
\]

For every inter-component edge \(e=(u,r,v)\), with \(u\in X_i\) and
\(v\in X_j\), bind both component and original endpoint identities:

\[
h_C(e)=H(\mcode{CB:component-edge:v1}\parallel h_{X_i}\parallel
h_V(u)\parallel r\parallel h_{X_j}\parallel h_V(v)\parallel h_E(e)).
\]

The closure commitment includes the condensed topology and the exact root node:

\begin{equation}
\begin{aligned}
\rho_t(s;\pi,c_\pi,B)=H(&\text{\texttt{CB:closure:v1}}\parallel h_V(r_s)
  \parallel h_{X_r}\\
  &{}\parallel\operatorname{sort}\{h_X\}
  \parallel\operatorname{sort}\{h_C(e):e\in E,\ X(u)\ne X(v)\}).
\end{aligned}
\label{eq:closure-root}
\end{equation}
Every \(\operatorname{sort}\{\cdot\}\) above denotes a canonical sorted
multiset, not set deduplication; parallel authenticated edge records retain
their multiplicity.

\begin{definition}[Exact resolvability]
\label{def:exact}
\(\Graph_t(s;\pi,c_\pi,B)\) is exactly resolvable under profile \(\pi\) iff every
material node and edge has a profile-admissible authenticated evidence record,
all endpoint checks and root reachability checks succeed, no material version
is floating or \mcode{NOASSERTION}, canonicalization succeeds, all required
epochs are current, and every dynamic load is either enumerated or enclosed by
an admissible stronger identity.
\end{definition}

\begin{lemma}[Closure determinism]
\label{lem:determinism}
For a fixed canonicalization profile, evidence records, typed graph, and root,
Equation~\eqref{eq:closure-root} is invariant to discovery order, resolver-local
node-handle alpha-renaming, one-for-one substitution of an alias for the same
interned canonical node while the canonical edge multiset is fixed, path
spelling that canonicalizes to the same locator, and enumeration order.
\end{lemma}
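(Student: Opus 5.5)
The plan is to show that $\rho_t$ is a function of one canonical object, an \emph{evidence-labelled multigraph up to isomorphism with distinguished root}, and that every transformation listed in Lemma~\ref{lem:determinism} leaves this object unchanged. Concretely, let $\Phi$ map a resolver output to the triple $\bigl(h_V(r_s),\ \mathcal M_V,\ \mathcal M_E\bigr)$. Here $\mathcal M_V$ is the multiset $\{h_V(v)\}$ over interned nodes and $\mathcal M_E$ is the multiset $\{h_E(e)\}$ over typed edges. Each $h_E(u,r,v)$ depends only on $q^E_t(u,r,v)$, whose fields are $r$, $h_V(u)$, $h_V(v)$, the source, $\epsilon_E$ and $\sigma_E$. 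None of these is a resolver-local handle, and $q_t$ excludes handles by definition. So $h_V$ and $h_E$ are functions of canonical records alone. The first step is this observation, together with the fact that $\operatorname{enc}$ is a deterministic function of the record under the fixed profile $c_\pi$.

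The second step handles each invariance in turn.
\begin{enumerate}
\item \emph{Handle alpha-renaming.} A bijection on handles commutes with $q_t$ and $q^E_t$, so $\mathcal M_V$ and $\mathcal M_E$ are unchanged.
\item \emph{Path spelling.} Since $C_k$ is a total function and two spellings canonicalize to the same $\ell$, they yield an identical $q_t(v)$ and hence the same $h_V(v)$.
\item \emph{Alias substitution.} An alias for an interned node has the same canonical record and is merged before commitment. With the canonical edge multiset held fixed by hypothesis, $\mathcal M_V$ and $\mathcal M_E$ are unchanged.
\item \emph{Discovery and enumeration order.} These affect only the order in which elements are produced. Every aggregation in Equation~\eqref{eq:closure-root} and in $h_X$, $h_C$ passes through $\operatorname{sort}\{\cdot\}$ over a multiset, so permutations are absorbed. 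Because the sort is on multisets rather than sets, multiplicities are preserved and no order-dependent deduplication occurs.
\end{enumerate}

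The third step, which I expect to be the main obstacle, is the condensation. The SCC partition must be shown to be determined by the graph and not by the traversal. This holds because strong connectivity is an isomorphism-invariant equivalence relation: Tarjan or Kosaraju may emit components in different orders, but the partition is the same, and the sorts absorb the order. For each component $X$, $h_X$ depends only on the multisets of node and edge hashes inside $X$. For each inter-component edge, $h_C(e)$ depends on $h_{X_i}$, $h_{X_j}$, $h_V(u)$, $h_V(v)$, $r$ and $h_E(e)$, all of which are already invariant. The root component $X_r$ is the component containing the canonical root $r_s$, so $h_{X_r}$ is invariant as well.

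Interning needs care. Two distinct handles with identical records collapse to one node, and this can change which SCC contains which edges. However, interning happens before commitment and is itself a deterministic function of the records. I would therefore argue on the post-interning graph, whose node set is identified by the set of records and whose edges form a multiset of records, and note that the hypothesis fixes this graph. Equal inputs to $H$ at every level then give equal $\rho_t$.

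No cryptographic assumption is needed. This is a pure invariance statement. Collision resistance (Assumption~\ref{ass:crypto}) would matter only for the converse direction, sensitivity, which the lemma does not claim.
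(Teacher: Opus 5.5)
Your proposal is correct and follows essentially the paper's route: the total canonicalizer \(C_k\) collapses equivalent spellings, handles and interned aliases disappear because node and edge records contain only canonical fields, the SCC partition of the post-interning canonical-node graph does not depend on traversal, and domain-separated multiset sorting at every level absorbs all permutations. One tightening: \(\rho_t\) factors through the record-labelled post-interning multigraph from your third step, not through the triple of hash multisets \((h_V(r_s),\mathcal M_V,\mathcal M_E)\), because SCC topology cannot be read off edge hashes without assuming \(H\) is injective, and assuming that would contradict your correct remark that no cryptographic assumption is needed.
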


\begin{proof}
Kind-specific canonicalization is total on admitted locators. Local handles are
quotiented by \(h_V\), and SCC condensation over the resulting canonical-node
graph is unique up to component naming. Every node, internal edge, component,
and inter-component edge is domain separated and sorted before hashing.
Therefore, handle alpha-renaming, one-for-one interned-alias substitution,
permutations, and equivalent admitted spellings yield identical byte encodings
at every level when the canonical edge multiset is unchanged.
The explicit \(h_V(r_s)\) distinguishes canonical roots inside one SCC;
the endpoint-sensitive \(h_C\) distinguishes different cross-component
attachments. The statement does not identify semantically equivalent but
differently evidenced programs.
\end{proof}

\subsection{Task grants and policy order}

A trusted task grant is

\[
\gamma=\langle p,\tau,\iota,\psi,R,D,A,\rho_\gamma,
\kappa,\mathbf e,\pi_\gamma,c_\gamma,[t_{\mathrm{nb}},t_{\mathrm{exp}}),n\rangle .
\]

Here \(p\) is principal, \(\tau\) tenant, \(\iota\) a version-qualified exact
task atom, \(\psi\) purpose, \(R\) recipients, \(D\) resource/data bounds,
\(A\) an effect ceiling, \(\rho_\gamma\) the authorized closure root,
\(\kappa\) the effect-contract version, \(\mathbf e\) expected exact epochs,
\(\pi_\gamma\) the evidence-profile identifier, \(c_\gamma\) the
canonicalizer/domain version, \([t_{\mathrm{nb}},t_{\mathrm{exp}})\) validity,
and \(n\) a nonce. The authenticated \(A\) is an immutable issuer-compiled
ceiling; later policy expansion cannot enlarge it. Grant construction fixes
\(B=\gamma.A\) before closure discovery and uses that same bound for issuance,
prepare, and commit resolution.

An effect policy denotes a set of external transitions. For policies
\(a_1,a_2\),

\[
a_1\restrict a_2
\iff
\denote{a_1}\subseteq\denote{a_2}.
\]

Lower policies authorize no more behavior. Let \(C_v\) be the authenticated
effect policy carried by an applicable contract node. For the current effect,
the closure operand is the relational meet

\[
A_{\mathrm{closure}}(\Graph_t,e)=
\bigotimes_{\substack{v\in\mathcal I_{\Graph_t}(e)\\
                      C_v\ \mathrm{applies\ to}\ e}} C_v .
\]
Here \(\bigotimes\) denotes iterated relational meet over the indexed policy
operands; Appendix~\ref{app:proofs} gives its binary definition and denotational
law.

The resolver obtains \(\mathcal I_{\Graph_t}(e)\) from the authenticated
root-to-mediated-sink provenance emitted by the adapter; under
Assumption~\ref{ass:closure}, it includes every contract capable of influencing
that event. If this provenance or a required contract is unresolved, the meet
is undefined and evaluation fails closed. Current task authority, current
closure contracts, runtime state, and purpose/provenance constraints compose
by intersection, not by field-wise union:

\begin{equation}
A_{\mathrm{cur}}(t,e)=A_{\mathrm{task}}(t)\cap
A_{\mathrm{closure}}(\Graph_t,e)\cap A_{\mathrm{state}}(t)
\cap A_{\mathrm{purpose/prov}}(t,e),
\label{eq:effective}
\end{equation}

and the final ceiling is
\(\denote{\gamma.A}\cap\denote{A_{\mathrm{cur}}(t,e)}\).
Metadata supplied by a Skill may propose a narrowing contract to a trusted
compiler, but cannot modify authenticated task/organizational operands or the
already issued \(\gamma.A\).

\subsection{Normalized external effects}

For a surface event \(x\) in current state \(S_t\), a trusted adapter produces

\[
\begin{aligned}
e_t=\Norm(x,S_t)=\langle&
op,target,recipient,dataClass,amount,argsDigest,\\
&preStateDigest,postconditionContract,purpose,taskAtom,\\
&effectContractVersion,durability,idempotencyScope\rangle .
\end{aligned}
\]

\(\Norm\) is partial. Failure, ambiguity, or an unsupported surface returns
\(\bot\), which denies. We call \(e_t\) the \emph{realized effect event} because
its concrete surface, arguments, target, and authoritative pre-state are fixed
at the durability gate; the term does not assert that the receiver's
postcondition has already occurred or is truthful.

Let \(\Omega\) be a declared external observation map.

\begin{definition}[Effect equivalence]
\label{def:effect-equivalence}
\[
e_1\equiv_\Omega e_2
\iff
\forall S\in\Sigma_\Omega:
\Omega(\denote{e_1}(S))=\Omega(\denote{e_2}(S)).
\]
\(\Sigma_\Omega\) is the common set of admissible pre-states on which both
transition denotations and the post-state observation map are defined.
\end{definition}

\begin{definition}[Registered policy-selection congruence]
\label{def:policy-congruence}
Let \(\mathcal B(\gamma,\Graph,S,t,e)\) be the canonical multiset of authenticated
policy operands selected for (A4), including the immutable grant ceiling,
current task/state/purpose operands, and every applicable contract selected
through \(\mathcal I_{\Graph}(e)\). An adapter group is policy-selection
congruent iff, for every supported pair \(e_1\equiv_\Omega e_2\) evaluated in
the same authenticated \((\gamma,\Graph,S,t)\),
\[
\begin{split}
\operatorname{authKey}(e_1)&=\operatorname{authKey}(e_2),\\
\denote{\bigotimes\mathcal B(\gamma,\Graph,S,t,e_1)}
&=\denote{\bigotimes\mathcal B(\gamma,\Graph,S,t,e_2)} .
\end{split}
\]
Here \(\operatorname{authKey}\) contains every normalized effect field consumed
by (A1)--(A6) but excludes the execution-surface name.
\end{definition}

The monitor need not prove semantic equivalence for arbitrary code. Registry
admission establishes Definition~\ref{def:policy-congruence}, normally by
requiring supported equivalent surfaces to emit the same canonical IR and by
making influence/contract selection a deterministic function of that IR and
the authenticated closure. A conservative adapter may emit an
over-approximated effect and thereby deny.

\subsection{Joint operation witness}

A prepared reservation at \(t_p\) binds the immutable record

\[
W_{t_p}=\langle H(\gamma),\rho_\gamma,\rho_p,H(e_p),
H(\lambda_p),\mathbf e_p,H(S_{t_p}^{pre}),\pi_\gamma,c_\gamma,r,n,t_p\rangle ,
\]

where \(\lambda_p\) is the purpose/provenance/recipient label and \(r\) is a
single-consumption reservation. Let \(\operatorname{Rec}(r)\) be the
authenticated reservation-store record. We define

\begin{align*}
\operatorname{BindingIntact}(W_{t_p},\gamma)\iff{}&
W.\mcode{grant}=H(\gamma)\\
&{}\land W.\mcode{grantRoot}=\gamma.\rho_\gamma
\\
&{}\land W.\mcode{prepareRoot}=\gamma.\rho_\gamma
\land W.\mcode{epochs}=\gamma.\mathbf e\\
&{}\land W.\mcode{profile}=\gamma.\pi_\gamma
\land W.\mcode{canon}=\gamma.c_\gamma\\
&{}\land W.\mcode{nonce}=\gamma.n\\
&{}\land \operatorname{Rec}(r)=
\langle H(\operatorname{enc}(W_{t_p}\setminus\{r\})),
\mcode{fresh}\rangle .
\end{align*}

Thus a field from another grant, profile, canonicalizer, effect, state, or
reservation changes the authenticated serialization rather than merely a local
variable.

\begin{definition}[Commit admissibility]
\label{def:admissible}
\(\operatorname{Admit}(\gamma,W_{t_p},e_{t_c},S_{t_c})\) holds at the commit
linearization time \(t_c\) iff:

\begin{align}
&\operatorname{Authn}(\gamma)\land
t_{\mathrm{nb}}\le t_c<t_{\mathrm{exp}}
\land \operatorname{TaskCurrent}(\iota,p,\tau,t_c), \tag{A1}\\
&\operatorname{Exact}(\Graph_{t_c}(s;\pi_\gamma,c_\gamma,\gamma.A),\pi_\gamma)
\land\rho_{t_c}(s;\pi_\gamma,c_\gamma,\gamma.A)
=W.\mcode{prepareRoot}=\rho_\gamma, \tag{A2}\\
&\operatorname{EpochsCurrent}(\mathbf e_{t_c},W.\mcode{epochs})
\land\operatorname{ContractCurrent}(\kappa,t_c)
\land\operatorname{ProfileCurrent}(\pi_\gamma,c_\gamma,t_c), \tag{A3}\\
&e_{t_c}\in\denote{\gamma.A}
\land e_{t_c}\in\denote{A_{\mathrm{cur}}(t_c,e_{t_c})}
\land\operatorname{LabelFits}(\lambda_{t_c},\psi,R,D), \tag{A4}\\
&H(S_{t_c}^{pre})=W.\mcode{pre}
\land\operatorname{FreshReservation}(r,n,t_c), \tag{A5}\\
&H(e_{t_c})=W.\mcode{effect}\land
H(\lambda_{t_c})=W.\mcode{label}\land
\operatorname{BindingIntact}(W_{t_p},\gamma). \tag{A6}
\end{align}
\end{definition}

The use of one witness is material: roots from one grant, effect ceilings from
another, or fresh epochs from a third may not be recombined into an authority
that no issuer granted.

\subsection{Safety invariants}

For every admitted commit event \(c\):

\begin{description}[leftmargin=1.1em]
  \item[I1 --- Trusted authority.] \(c\)'s grant descends from the authenticated
  current-task authority; Skill metadata is not an authority root.
  \item[I2 --- Exact closure.]
  \(c.\rho=\rho_t(s;\pi_\gamma,c_\gamma,\gamma.A)=\rho_\gamma\), and every
  material node and edge satisfies the grant-bound evidence profile.
  \item[I3 --- Non-amplified effect.] \(c.e\in\denote{\gamma.A}\cap
  \denote{A_{\mathrm{cur}}(t,c.e)}\).
  \item[I4 --- Current operation.] All exact epochs, contract versions,
  pre-state, and reservation facts hold at the linearization point.
  \item[I5 --- Path invariance.] Supported effect-equivalent surface events are
  checked against the same policy denotation.
\end{description}

\section{ClosureBound Design}
\label{sec:design}

Figure~\ref{fig:architecture} shows the control path. Blue components produce
trusted task authority, purple components resolve identity, green components
mediate effects, amber denotes reauthorization or uncertainty, and red denotes
fail-closed rejection. Supply-chain attestations and Skill scanners feed
evidence; they do not bypass the task/effect decision.

\begin{figure}[t]
\centering
\begin{tikzpicture}[
  font=\scriptsize,
  >=Latex,
  node distance=4.5mm and 5mm,
  box/.style={rounded corners=2pt, draw, minimum height=7.2mm,
              align=center, inner xsep=4pt, inner ysep=2.5pt},
  flow/.style={->,line width=.65pt},
  gate/.style={diamond,aspect=2.15,draw,align=center,inner sep=1.7pt}
]
\node[box,draw=opBlue,fill=opBlueFill,text=opBlueText] (task)
  {trusted task\\grant issuer};
\node[box,draw=opPurple,fill=opPurpleFill,text=opPurpleText,right=of task] (resolver)
  {typed recursive\\closure resolver};
\node[box,draw=opPurple,fill=opPurpleFill,text=opPurpleText,right=of resolver] (root)
  {canonical graph\\commitment \(\rho\)};
\node[box,draw=opGreen,fill=opGreenFill,text=opGreenText,below=8mm of resolver] (norm)
  {surface adapter +\\effect normalizer};
\node[box,draw=opBlue,fill=opBlueFill,text=opBlueText,left=of norm] (state)
  {current state,\\epochs, labels};
\node[gate,draw=opGreen,fill=opGreenFill,text=opGreenText,right=of norm] (joint)
  {joint\\predicate};
\node[box,draw=opGreen,fill=opGreenFill,text=opGreenText,right=10mm of joint] (commit)
  {atomic reserve/\\commit adapter};
\node[box,draw=opAmber,fill=opAmberFill,text=opAmberText,below=of joint] (reauth)
  {typed contract/\\evidence diff +\\new authorization};
\node[box,draw=opRed,fill=opRedFill,text=opRedText,below=of commit] (deny)
  {deny or\\indeterminate};
\draw[flow,opBlue] (task) -- (resolver);
\draw[flow,opPurple] (resolver) -- (root);
\draw[flow,opPurple] (root.south) -- (joint.north);
\draw[flow,opBlue] (task.south) -- ++(0,-4mm) -| (joint.135);
\draw[flow,opBlue] (state) -- (norm);
\draw[flow,opGreen] (norm) -- (joint);
\draw[flow,opGreen] (joint) --
  node[midway,above=2.5mm]{all current} (commit);
\draw[flow,opAmber] (joint) -- node[midway,left=1mm]{changed} (reauth);
\draw[flow,opRed] (joint.south east) to[out=-18,in=180]
  node[pos=.68,below=1mm,align=center]{invalid/\\unknown} (deny.west);
\draw[flow,opAmber,dashed] (reauth.west) -|
  ([xshift=-4mm]state.west) |- (task.west);
\draw[flow,opRed,dashed] (commit) -- node[right]{uncertain settlement} (deny);
\end{tikzpicture}
\caption{\ClosureBound architecture. Authority, exact transitive identity,
current state, purpose/provenance, and the normalized effect meet at one
operation-time decision. A material change creates a new grant; it never edits
the old closure binding.}
\Description{A pipeline with trusted task grant, closure resolver, graph root,
current state, effect normalizer, joint predicate, atomic commit, reauthorization,
and deny or indeterminate outcomes.}
\label{fig:architecture}
\end{figure}

\subsection{Resolver adapters and evidence profiles}

Different node kinds cannot share one naïve notion of version. ClosureBound
uses a small adapter interface:

\[
\mathsf{resolve}_k(locator,parent,context,\pi,c_\pi,B,t)
\rightarrow
\langle q_t(v),outgoing,status,contextKey\rangle .
\]

An outgoing item is
\(\langle r,k',locator',q^E,context'\rangle\). The context key includes every
lockfile, feature selection, platform, tenant, and resolver-policy fact that can
change outgoing dependencies. Its authenticated digest is the \(\zeta\) field
of the contextual vertex identity. Consequently, interning and memoization by
\((q,contextKey)\), rather than by \(q\) alone, cannot collapse two
context-sensitive closures or erase which outgoing subgraph belongs to which
call context.

\begin{itemize}
  \item A file adapter binds the canonical path, bytes, executable mode, and
  enclosing root; symlink policy is explicit.
  \item A package adapter binds ecosystem, registry namespace, immutable
  version/digest, lock evidence, and resolved runtime dependencies.
  \item A recursive Skill adapter resolves the target directory and repeats the
  Skill/file/package/service walk.
  \item A tool adapter binds the server identity, method schema, argument and
  result contracts, and declared effect contract.
  \item A model adapter binds an immutable artifact or a provider deployment
  identity at the evidence strength required by policy.
  \item A service adapter binds canonical audience, authenticated contract,
  deployment/epoch evidence where available, and the effect adapter version.
\end{itemize}

The MCP 2026-07-28 revision makes this context requirement concrete. Its
stateless core carries protocol version and client capabilities per request,
adds server discovery, and makes tool/resource listings deterministic and
cacheable with \mcode{ttlMs} and \mcode{cacheScope}; authorization credentials
are issuer-bound~\cite{mcp2026changelog,mcp2026auth}. An MCP adapter therefore
binds at least the negotiated protocol revision, authenticated server identity,
tool schema/listing digest, cache scope and observed freshness interval, plus
any explicit server-minted cross-call handle. A TTL is a cache hint, not proof
that an old grant authorizes a later deployment or effect.

An evidence profile \(\pi\) sets minimum classes per node/effect risk. For
example, a read-only documentation task may admit a service URL plus schema
epoch, while a payment task may require an attested deployment and
transactional idempotency contract. Profiles are policy, not self-declared
Skill metadata.

\subsection{Recursive resolution}

Algorithm~\ref{alg:resolve} builds the closure. Its context-sensitive visited
set terminates repeated expansion; SCC canonicalization handles cycles after
the walk. Discovery order is recorded for diagnosis but excluded from the
canonical root.

\begin{algorithm}[t]
\caption{\textsc{Resolve-Closure}}
\label{alg:resolve}
\begin{algorithmic}[1]
\Require root Skill \(s\), profile \(\pi\), canonicalizer \(c_\pi\), candidate
         ceiling \(B\), trusted time \(t\)
\Ensure exact graph \(G\) and root \(\rho\), or fail closed
\State \(Q\gets[(\mathsf{skill},s,\bot,\bot,\bot,ctx_{\mathrm{root}})]\);
       \(V\gets\emptyset\); \(E\gets\emptyset\); \(Seen\gets\emptyset\)
\While{\(Q\ne\emptyset\)}
  \State \((k,\ell,parent,r_{in},q^E_{in},ctx)\gets\Call{Pop}{Q}\)
  \State \(\ell'\gets C^{c_\pi}_k(\ell,parent,ctx)\);
         reject if canonicalization fails
  \State \((q,outs,status,ctxKey)\gets
         \mathsf{resolve}_k(\ell',parent,ctx,\pi,c_\pi,B,t)\)
  \If{\(status\ne\mcode{exact}\) or \(q.\sigma\not\models\pi(k)\)}
    \State \Return \(\mcode{UNRESOLVED}(k,\ell',status)\)
  \EndIf
  \State verify \(q.\zeta=H(\operatorname{enc}(ctxKey))\) under \(\pi\)
  \State \(v\gets\Call{Intern}{q,ctxKey}\); \(V\gets V\cup\{v\}\)
  \If{\(parent=\bot\)} \(r_s\gets v\) \EndIf
  \If{\(parent\ne\bot\)}
    \State verify \(q^E_{in}\) authenticates \((parent,r_{in},v)\) under \(\pi\)
    \State \(E\gets E\cup\{(parent,r_{in},v,q^E_{in})\}\)
  \EndIf
  \If{\((v,ctxKey)\notin Seen\)}
    \State \(Seen\gets Seen\cup\{(v,ctxKey)\}\)
    \ForAll{\((r,k',\ell_o,q^E,ctx_o)\in outs\)}
      \State \(\Call{Push}{Q,(k',\ell_o,v,r,q^E,ctx_o)}\)
    \EndFor
  \EndIf
\EndWhile
\State reject if any node/edge material under \((\pi,c_\pi,B)\) is unreachable,
       missing, stale, or \mcode{NOASSERTION}
\State \(G\gets(V,E,r_s)\);
       \(\rho\gets\Call{SCC-Merkle-Commit}{G,\pi,c_\pi,B}\)
\State \Return \((G,\rho)\)
\end{algorithmic}
\end{algorithm}

Resolver output is evidence, not authority. A malicious but immutably pinned
package will be identified consistently and can still be denied by the
grant/effect policy.

\subsection{Grant compilation}

Before closure discovery, the grant compiler derives a trusted upper bound
\(B_0\) from the authenticated current-task certificate, organizational/user
policy, and purpose constraints. A Skill-authored manifest may propose an
additional narrowing conjunct, but cannot add an effect absent from those
trusted operands. Closure-derived contracts do not contribute to \(B_0\).
The resolver then computes \(\Graph_t(s;\pi,c_\pi,B_0)\); the compiler validates
that its prospective relational clauses are representable and freezes
\(\gamma.A=B_0\) together with \(\rho_\gamma\). At prepare and commit,
closure-derived contracts remain a separately recomputed conjunct of
\(A_{\mathrm{cur}}\) in Equation~\eqref{eq:effective}; they never feed a
narrowed result back into the graph-discovery bound. This two-phase rule avoids
a circular equation of the form \(A\leftarrow\Graph(A)\).

Cross-dimensional correlation is preserved. A grant for
\((\mathsf{billing},\mathsf{vendorA},\$500)\) and another for
\((\mathsf{refund},\mathsf{customerB},\$50)\) cannot donate their fields to
authorize \((\mathsf{billing},\mathsf{customerB},\$500)\). The executable
representation is a finite relation of complete effect clauses: authorization
meets preserve whole tuples rather than unioning independently permitted
fields. Appendix~\ref{app:proofs} gives the relational-meet argument and the
corresponding clause-splicing counterexample.

\subsection{Effect normalization and path registry}

Each protected adapter registers:

\[
\alpha=\langle surfaceVersion,match,normalize,effectContract,
stateWitness,durability,fence,settle\rangle .
\]

\mcode{match} identifies intercepted events without trusting model prose;
\mcode{normalize} emits the IR; \mcode{stateWitness} reads authoritative
preconditions; \mcode{durability} declares the irreversible boundary; and
\mcode{fence} connects the monitor reservation to the sink.

The registry groups adapters by effect-contract denotation. A direct payment
API and a browser payment form may differ in fields and timing but normalize to
the same payment clause. A shell adapter that cannot determine the payee and
amount before execution cannot join that group; it must be denied or confined
in a sandbox whose maximum possible effect is authorized.

\subsection{Prepare and commit}

Algorithm~\ref{alg:commit} makes the temporal boundary explicit.
\mcode{Prepare} is advisory and reserves a candidate. \mcode{Commit} repeats
all volatile checks and consumes a fence immediately before durability.
Define \(\operatorname{PrepareEligible}(\gamma,\xi_p,G_p,\rho_p,S_p,
\lambda_p,\mathbf e_p,e_p,t_p)\) as the prospective form of (A1)--(A4):
every occurrence of a prepared-witness field is replaced by its value in
\(\xi_p\), while reservation-only (A5)--(A6) are not yet evaluated. Thus this
predicate is defined before \(W_{t_p}\) exists.

\begin{algorithm}[t]
\caption{\textsc{Prepare-Then-Commit}}
\label{alg:commit}
\begin{algorithmic}[1]
\Require authenticated grant \(\gamma\), root Skill \(s\), surface event \(x\)
\State \(t_p\gets\Call{Trusted-Time}{}\)
\State \((\xi_p,G_p,\rho_p,S_p,\lambda_p,\mathbf e_p,e_p)\gets
       \Call{Acquire-Consistent-Snapshot}
       {s,x,\gamma.\pi_\gamma,\gamma.c_\gamma,\gamma.A,t_p}\)
\State deny unless \(e_p\ne\bot\) and
       \(\operatorname{PrepareEligible}(\gamma,\xi_p,G_p,\rho_p,S_p,
       \lambda_p,\mathbf e_p,e_p,t_p)\)
\State \(W^{-}\gets\langle H(\gamma),\rho_\gamma,\rho_p,H(e_p),
       H(\lambda_p),\mathbf e_p,H(S_p),\pi_\gamma,c_\gamma,n,t_p\rangle\)
\State \(r\gets\Call{Reserve}{H(\operatorname{enc}(W^{-})),n}\);
       \(W_{t_p}\gets W^{-}\cup\{r\}\)
\Statex
\State \Comment{Operation-time linearization}
\State \(t_c\gets\Call{Trusted-Time}{}\)
\State \((\xi_c,G_c,\rho_c,S_c,\lambda_c,\mathbf e_c,e_c)\gets
       \Call{Acquire-Consistent-Snapshot}
       {s,x,\gamma.\pi_\gamma,\gamma.c_\gamma,\gamma.A,t_c}\)
\State \((d,f)\gets\Call{Atomic-Validate-Consume}
       {\gamma,W_{t_p},\xi_c,\rho_c,H(S_c),H(\lambda_c),
        \mathbf e_c,H(e_c),t_c}\)
\If{\(d\ne\mcode{ALLOW\_COMMIT}\)}
  \State \Return \(d\) with authenticated failure evidence
  \Statex \hspace{\algorithmicindent}and exact typed diff when available;
          otherwise \(\mathsf{UNKNOWN}\)
\EndIf
\State \(\omega\gets\Call{Dispatch-With-Fence}{x,e_c,f}\)
\If{authenticated settlement or independent postcondition verifies}
  \State \Return \mcode{COMMITTED}
\ElsIf{dispatch is known not to have crossed durability}
  \State \Return \mcode{ABORTED}
\Else
  \State \Return \mcode{INDETERMINATE}
\EndIf
\end{algorithmic}
\end{algorithm}

\mcode{Acquire-Consistent-Snapshot} binds the resolver frontier, state,
labels, epochs, effect, profile, and canonicalizer in the opaque token
\(\xi\). \mcode{Atomic-Validate-Consume} re-evaluates (A1)--(A6), verifies that
all supplied values still equal the authoritative facts represented by
\(\xi_c\), and consumes \(r\) in the same compare-and-swap. The returned fence
is an authenticated one-shot token over at least
\((H(\gamma),\rho_c,H(e_c),H(S_c),\mathbf e_c,n,r)\). A trusted adapter or
receiver must re-normalize the actual request and atomically require the same
\(H(e_c)\) before crossing durability; mutable \(x\) is never used as an
unbound authority input. A local database transaction can implement this
directly. A remote service can consume the effect-bound fence and reject older
frontier/epoch tokens. If neither pattern is available, the system can
authorize dispatch but cannot claim atomic operation-time safety for the
receiver.

\begin{figure}[t]
\centering
\begin{tikzpicture}[
  font=\small,
  >=Latex,
  phasebox/.style={rounded corners=2pt,draw,minimum width=20mm,minimum height=7mm,
               align=center,inner sep=2pt},
  arr/.style={->,line width=.7pt}
]
\node[phasebox,draw=opBlue,fill=opBlueFill,text=opBlueText] (p)
  {prepare\\\(\rho_0,e_0,S_0\)};
\node[phasebox,draw=opAmber,fill=opAmberFill,text=opAmberText,right=7mm of p] (u)
  {dependency or\\state update};
\node[phasebox,draw=opPurple,fill=opPurpleFill,text=opPurpleText,right=7mm of u] (r)
  {re-resolve\\\(\rho_1,e_1,S_1\)};
\node[phasebox,draw=opRed,fill=opRedFill,text=opRedText,below=6mm of r] (d)
  {mismatch:\\deny/reauthorize};
\node[phasebox,draw=opGreen,fill=opGreenFill,text=opGreenText,right=7mm of r] (c)
  {atomic fence\\and commit};
\draw[arr,opBlue] (p) -- (u);
\draw[arr,opAmber] (u) -- (r);
\draw[arr,opRed] (r) -- node[right]{changed} (d);
\draw[arr,opGreen] (r) --
  node[midway,above=4mm,align=center]{same\\and current} (c);
\end{tikzpicture}
\caption{Operation-time update discipline. Preparation never makes an old
closure authoritative. A distinguishable material mismatch at the
linearization point forces denial or a new grant.}
\Description{Timeline from prepare, through dependency update and re-resolution,
to either deny and reauthorize or atomic commit.}
\label{fig:timeline}
\end{figure}

\subsection{Typed contract/evidence diff and reauthorization}

When \(\rho_1\ne\rho_\gamma\), the system reports a typed diff:

\[
\Delta=\langle V^+,V^-,V^\sim,E^+,E^-,
A^+,A^-,\textsf{evidenceChanges}\rangle ,
\]

where \(V^\sim\) records version/digest/epoch changes and \(A^+,A^-\) are
differences in the finite adapter-declared effect-contract denotations, not
inferences of arbitrary program semantics. If the authenticated contracts do
not make that difference exactly representable, the field is
\(\mathsf{UNKNOWN}\) and review fails closed. Cosmetic metadata outside the
material closure can be marked non-authoritative, but the old grant remains
immutable. A trusted issuer creates \(\gamma'\) over \(\rho_1\) after reviewing
\(\Delta\). This transition is widening/renewal, not ordinary execution.

\subsection{Caching without stale authority}

Resolvers may cache immutable nodes by digest. Mutable lookups are keyed by
authority epoch and evidence profile. A cached graph is a performance hint:
commit still validates an authenticated root-to-currentness certificate or
re-resolves. Cache age is not a security proof. Negative and unresolved results
may be cached only fail closed.

\section{Security Analysis}
\label{sec:security}

\subsection{Metadata cannot mint authority}

\begin{theorem}[Metadata non-authority]
\label{thm:metadata}
Under Assumptions~\ref{ass:closure}, \ref{ass:fresh}, \ref{ass:linear},
and~\ref{ass:crypto} and the grant-compilation rule, changing Skill-authored
metadata, instructions, self-declared permissions, or explanations cannot add
an element to the effect set admitted by an existing grant.
\end{theorem}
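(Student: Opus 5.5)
The plan is to show that every commit admitted under an existing grant \(\gamma\) lies in \(\denote{\gamma.A}\cap\denote{A_{\mathrm{cur}}(t_c,e)}\), and then to show that no Skill-authored input can enlarge either factor. The claim is a set inclusion: if \(\mathcal M\) and \(\mathcal M'\) are the metadata before and after a change, then every effect admitted under \(\gamma\) in the \(\mathcal M'\) world already lies in \(\denote{\gamma.A}\). I also want to say something sharper about the current operands, namely that only the closure operand can depend on \(\mathcal M\), and any such dependence shows up through \(\rho\).

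\emph{First step.} Take any admitted commit under \(\gamma\). By (A4) and Definition~\ref{def:admissible}, \(e_{t_c}\in\denote{\gamma.A}\). By Assumption~\ref{ass:crypto}, \(\gamma\) is unforgeable, and (A6) \(\operatorname{BindingIntact}\) ties \(W\) to \(H(\gamma)\). So the ceiling that is checked is exactly the issued one. The grant-compilation rule fixes \(\gamma.A=B_0\), and \(B_0\) is derived only from the task certificate, organizational policy and purpose. Skill manifests may contribute only narrowing conjuncts, and this happens before issuance. After issuance, the authenticated \(\gamma.A\) is immutable. Hence \(\denote{\gamma.A}\) is independent of any post-issuance change to metadata, instructions, permissions or explanations. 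This alone gives the inclusion: the set of effects admitted under \(\gamma\) is contained in \(\denote{\gamma.A}\) in every metadata world.

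\emph{Second step.} I then examine the other operands of Equation~\eqref{eq:effective}. The operands \(A_{\mathrm{task}}\), \(A_{\mathrm{state}}\) and \(A_{\mathrm{purpose/prov}}\) come from authoritative sources (Assumption~\ref{ass:fresh}), not from the Skill. \(A_{\mathrm{closure}}\) is a meet of authenticated contract operands, and a meet can only shrink the denotation. Composition is by intersection, so even an adversarially chosen contract cannot raise the final ceiling above \(\denote{\gamma.A}\).

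\emph{Third step.} Metadata that is material lives in closure nodes by Assumption~\ref{ass:closure}. Changing it changes some \(q_t(v)\), and therefore \(\rho_{t_c}\ne\rho_\gamma\) by collision resistance and the root-sensitive construction of Lemma~\ref{lem:determinism}. Then (A2) fails and the outcome is DENY or REAUTHORIZE. That path produces a new grant \(\gamma'\), not an expansion of \(\gamma\). Non-material metadata does not change \(\rho\), but it also cannot influence any operand. Assumption~\ref{ass:linear} ensures that the check and reservation consumption happen at one point, so metadata cannot change between validation and durability and slip an effect past a stale check.

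\emph{Main obstacle.} The hard part is arguing cleanly that Skill-authored text cannot reach the policy operands indirectly: through the trusted compiler's narrowing proposal, through contract nodes the Skill itself declares, or through the label \(\lambda\). My approach is to treat each of these as a function whose output is met with trusted operands. It is then enough to prove one monotonicity lemma: for any policy operands \(a\) and \(b\), \(\denote{a\otimes b}\subseteq\denote{a}\). I would take this lemma from the relational-meet law in the appendix and apply it uniformly to all three channels.
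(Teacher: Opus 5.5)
Your proposal is correct and follows essentially the same argument as the paper: the issuer-frozen ceiling \(\gamma.A\) is checked by (A4); Skill proposals enter only as trusted-compiler conjuncts that cannot exceed \(\denote{\gamma.A}\); and a material change alters \(\rho\), so the old grant fails (A2). One citation fix: the fact that a material change alters \(\rho\) is Lemma~\ref{lem:separation} (material-change separation), not Lemma~\ref{lem:determinism}, which states the opposite, invariance direction.
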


\begin{proof}
The issuer authenticates and freezes \(\gamma.A\); Skill data cannot rewrite
that field. At commit, (A4) requires membership both in this immutable ceiling
and in the independently recomputed current task, closure, state, and
purpose/provenance meet. A proposed Skill operand is accepted only through the
trusted compiler as an additional conjunct. Enlarging or replacing that
proposal still cannot authorize an effect outside \(\gamma.A\). A changed
material instruction also changes \(\rho\), so the old grant fails (A2). The
result does not say metadata cannot influence model behavior; it says such
influence is not trusted authority at the mediated effect boundary.
\end{proof}

\subsection{Closure and version non-inheritance}

\begin{lemma}[Material-change separation]
\label{lem:separation}
Under Assumption~\ref{ass:crypto}, changing the canonical root identity so that
\(h_V(r_s)\) changes, a canonical material node record, an authenticated typed
material-edge record (including either endpoint's canonical \(h_V\)), or SCC
topology changes \(\rho\) except with negligible cryptographic probability.
\end{lemma}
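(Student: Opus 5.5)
The plan is a reduction: an injectivity argument on the nested hash structure of Equation~\eqref{eq:closure-root}. Suppose two graphs \(\Graph\) and \(\Graph'\) are exactly resolvable under the same \((\pi,c_\pi,B)\) and differ in one of the four listed ways, yet \(\rho=\rho'\). I will extract, by a deterministic procedure, either a collision or a second preimage for \(H\). By Assumption~\ref{ass:crypto} this happens with negligible probability. Everything rests on two facts. First, the domain tags \(\mcode{CB:node:v1}\), \(\mcode{CB:edge:v1}\), \(\mcode{CB:scc:v1}\), \(\mcode{CB:component-edge:v1}\) and \(\mcode{CB:closure:v1}\) separate the levels. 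Second, \(\operatorname{enc}\) and the sorted-multiset serialization are injective and prefix-free, so equal hash inputs imply equal structured arguments.

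The descent proceeds level by level. At the top, \(\rho=\rho'\) together with distinct inputs to the closure hash is already a collision. Otherwise the inputs are equal, which gives three things:
\begin{enumerate}
\item \(h_V(r_s)=h_V(r'_s)\) and \(h_{X_r}=h_{X'_r}\);
\item the sorted multisets of component hashes agree;
\item the sorted multisets of inter-component hashes \(h_C\) agree.
\end{enumerate}
The same step applies to each \(h_C(e)\), which returns equal \(h_{X_i},h_V(u),r,h_{X_j},h_V(v),h_E(e)\). It applies to each \(h_X\), which returns equal sorted multisets of node and internal-edge hashes. Finally it applies to \(h_V\) and \(h_E\), which returns equal \(\operatorname{enc}(q_t(v))\) and \(\operatorname{enc}(q^E_t)\). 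The conclusion is that, absent a collision, the canonical node-record multiset, the edge-record multiset (including endpoint \(h_V\)'s, since \(q^E\) contains them), the root hash, the partition into components, and the cross-component attachments all coincide.

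Then I treat each hypothesis case in turn:
\begin{enumerate}
\item A changed root identity contradicts equality of \(h_V(r_s)\).
\item A changed material node record contradicts equality of the node multiset. This uses the fact that every node of the exactly resolvable graph lies in some SCC and hence feeds some \(h_X\).
\item A changed typed edge record, or a change in either endpoint's \(h_V\), contradicts equality of the edge multiset. Here I split edges by type: internal edges appear in \(E[X]\), and inter-component edges appear in \(h_C\) via \(h_E(e)\).
\item A changed SCC topology contradicts equality of the component multiset, since that multiset records exactly which nodes and edges are grouped together, or else contradicts the \(h_C\) multiset, which records which components connect through which endpoints.
\end{enumerate}
Multiplicity is preserved because the sorts are multisets, so duplicating a parallel edge is also detected.

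The main obstacle is making the extraction rigorous for multisets. Equality of sorted hash lists does not match elements to structures one-to-one. Two different multisets of records could hash to the same sorted list only if some record pair collides. I will handle this with a pigeonhole and matching argument: sort both lists, pair positions, and any pair whose preimages differ is a collision. A second subtlety is that the SCC condensation is determined by the canonical graph, as in the proof of Lemma~\ref{lem:determinism}. Because of this, "topology changes" must be read as a change in the canonical edge multiset or its partition, which the descent already captures. I will state that the argument requires \(\operatorname{enc}\) to be injective and the tags to be distinct, and I will note that the reduction makes polynomially many hash queries, so the loss is negligible.
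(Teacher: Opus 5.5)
Your proposal is correct and follows essentially the same route as the paper. The paper argues that any changed node, edge, SCC, or component-edge encoding (and, for an SCC-internal root change, the separately committed \(h_V(r_s)\)) propagates to \(\rho\), so equality would require a collision or second preimage in \(H\); your level-by-level descent from \(\rho=\rho'\) is the contrapositive, collision-extracting form of that argument, and it is more explicit than the paper in its sorted-multiset matching step and in stating the injectivity and domain-separation requirements on \(\operatorname{enc}\) that the paper leaves implicit.
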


\begin{proof}
The changed record changes a domain-separated node, edge, SCC, or
endpoint-sensitive component-edge encoding. An SCC-internal root change alters
the separately committed \(h_V(r_s)\). Each change propagates to the final
root. Equality would require a collision or second preimage in \(H\). Changes
outside the declared material closure are not covered.
\end{proof}

\begin{theorem}[Version/closure non-inheritance]
\label{thm:noninherit}
Under Assumptions~\ref{ass:closure}, \ref{ass:fresh}, \ref{ass:linear},
and~\ref{ass:crypto}, let \(\gamma\) authorize closure root \(\rho_0\). If, at
the commit linearization point, a material transitive dependency,
service/tool/effect contract, or typed edge has a canonical value different
from the value bound by \(\rho_0\), \(\gamma\) cannot admit that changed closure
without a new grant.
\end{theorem}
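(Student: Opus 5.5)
The argument is a reduction to Lemma~\ref{lem:separation} combined with clause (A2) of Definition~\ref{def:admissible}. An admitted commit under \(\gamma\) must satisfy \(\rho_{t_c}(s;\pi_\gamma,c_\gamma,\gamma.A)=W.\mcode{prepareRoot}=\rho_\gamma=\rho_0\). It therefore suffices to show that the hypothesized material change forces \(\rho_{t_c}\ne\rho_0\), except with negligible probability, or else causes the resolver to fail closed.

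\textbf{Step 1: case split on exact resolvability at \(t_c\).}
\begin{itemize}
\item If \(\operatorname{Exact}(\Graph_{t_c}(s;\pi_\gamma,c_\gamma,\gamma.A),\pi_\gamma)\) fails, then (A2) fails and the commit is denied. This covers a changed dependency that became floating, \mcode{NOASSERTION}, stale, or unauthenticated.
\item If resolution is exact, Assumption~\ref{ass:closure} ensures the changed material object appears in \(\Graph_{t_c}\), either as its own node or edge or through an enclosing stronger node whose record must then differ.
\item Assumption~\ref{ass:fresh} ensures that its version, digest, and epoch fields come from authoritative sources. A Skill-supplied stale value therefore cannot mask the change.
\end{itemize}
One subtlety belongs here. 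The discovery bound \(B=\gamma.A\) is frozen at grant construction, so materiality is judged relative to the same \((\pi_\gamma,c_\gamma,\gamma.A)\) at both times. The comparison is therefore well-posed, and the change cannot be excused as non-material by moving the bound.

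\textbf{Step 2: map each kind of change to a record change and invoke Lemma~\ref{lem:separation}.}
\begin{itemize}
\item A different version, deployment identity, or content digest of a transitive dependency changes \(q_{t_c}(v)\), hence \(h_V(v)\).
\item A changed service, tool, or effect-contract identity changes the \(\chi\), \(\delta\), or \(\epsilon\) field of the corresponding contract node.
\item A changed typed edge changes \(q^E\): its kind, its endpoints via \(h_V\), or its evidence. It may also change SCC topology.
\end{itemize}
In each case Lemma~\ref{lem:separation} gives \(\rho_{t_c}\ne\rho_0\) unless \(H\) admits a collision or second preimage, which Assumption~\ref{ass:crypto} rules out. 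Then (A2) fails, and \(\operatorname{Admit}\) is false.

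\textbf{Step 3: close the remaining escape routes.}
\begin{itemize}
\item \emph{Reusing a reservation prepared under the old closure.} (A6) via \(\operatorname{BindingIntact}\) forces \(W.\mcode{prepareRoot}=\gamma.\rho_\gamma\), and (A2) compares that field against the freshly recomputed root, so an old reservation does not help.
\item \emph{A change inserted between validation and durability.} Assumption~\ref{ass:linear} places revalidation and reservation consumption at one linearization point, or requires the adapter to consume a fence bound to \(\rho_c\). No change can slip in after the check.
\item \emph{Editing \(\gamma\) to bind the new root.} The grant is authenticated and immutable, so any change to \(\rho_\gamma\) yields a distinct \(H(\gamma)\) that needs a new issuer signature. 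That is exactly the "new grant" the statement permits.
\end{itemize}

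\textbf{Main obstacle.} The delicate step is Step 1's completeness claim. The hypothesis speaks of a change to a "material transitive dependency," while Lemma~\ref{lem:separation} only covers changes to represented canonical records. Bridging these needs Assumption~\ref{ass:closure}, together with the observation that a change hidden inside an enclosing stronger identity must surface as a change in that node's record; otherwise the enclosure would not be admissible under Definition~\ref{def:exact}. I would state this as an explicit claim and note that changes outside the declared material closure are not covered, mirroring the boundary in Lemma~\ref{lem:separation}.
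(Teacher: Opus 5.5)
Your proposal is correct and follows essentially the same route as the paper: a case split on exact resolvability, Lemma~\ref{lem:separation} forcing \(\rho_{t_c}\ne\rho_0\) in the exact case, (A2) rejecting both branches, and the immutability of \(\gamma\) forcing a new grant. Your extra points (the frozen bound \(B=\gamma.A\), the stale-reservation route via \(\operatorname{BindingIntact}\), linearization, and the remark on enclosed nodes) are sound elaborations that the paper's short proof leaves implicit or places among its non-conclusions.
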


\begin{proof}
By Lemma~\ref{lem:separation}, an exactly resolved material change gives
\(\rho_1\ne\rho_0\) except with negligible probability. If resolution is no
longer exact, Definition~\ref{def:exact} fails. In the first case (A2) rejects
the mismatch; in the second it rejects incompleteness. Algorithm~\ref{alg:commit}
returns \mcode{REAUTHORIZE} or \mcode{DENY}, and the immutable grant cannot be
edited.
\end{proof}

\statementheading{Non-conclusions}
The theorem does not detect behavioral changes behind unchanged weak service
evidence, malicious behavior already present in \(\rho_0\), or omitted
dependencies that violate Assumption~\ref{ass:closure}.

\subsection{Transitive effect non-amplification}

\begin{lemma}[Monitor refinement]
\label{lem:refinement}
If the atomic validation step in Algorithm~\ref{alg:commit} returns its
commit-allow decision---equivalently, if execution reaches the fenced dispatch
step---then
\(\operatorname{Admit}(\gamma,W_{t_p},e_{t_c},S_{t_c})\) held at that
validation step's linearization point.
\end{lemma}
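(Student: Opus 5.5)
The argument traces the control flow of Algorithm~\ref{alg:commit} and unfolds the specification of \textsc{Atomic-Validate-Consume}. The only path to the fenced dispatch step goes through the conditional on \(d\). Every other branch returns \(d\in\{\mcode{DENY},\mcode{REAUTHORIZE}\}\) before dispatch. So reaching \textsc{Dispatch-With-Fence} forces \(d=\mcode{ALLOW\_COMMIT}\), which gives the stated equivalence between the two hypotheses.

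It then suffices to show that \(\mcode{ALLOW\_COMMIT}\) is returned only when \(\operatorname{Admit}\) holds. The paper specifies that \textsc{Atomic-Validate-Consume} re-evaluates (A1)--(A6). It also checks that every supplied value (\(\rho_c\), \(H(S_c)\), \(H(\lambda_c)\), \(\mathbf e_c\), \(H(e_c)\), \(t_c\)) equals the authoritative fact represented by \(\xi_c\), and consumes \(r\) in the same compare-and-swap.

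I would go clause by clause. (A1) and (A3) read the trusted clock and the epoch, contract and profile services, which are fail-closed authoritative by Assumption~\ref{ass:fresh}. (A2) uses \(\rho_c\) together with exactness of \(G_c\) under \(\pi_\gamma\), which are tied to \(\xi_c\). (A4) evaluates \(e_c\) against \(\denote{\gamma.A}\) and the recomputed \(A_{\mathrm{cur}}\); here a failed normalization \(e_c=\bot\) or an undefined meet denies. (A5) and (A6) compare hashes against the fields of \(W_{t_p}\), check \(\operatorname{BindingIntact}\) against \(\operatorname{Rec}(r)\), and test reservation freshness.

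Each clause is therefore evaluated on values that coincide with the authoritative facts at one instant. That instant is the compare-and-swap, and it serves as the linearization point \(t_c\).

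The main obstacle is the temporal gap. The snapshot \(\xi_c\) is acquired before the atomic step, so facts could change in between. I would close this with Assumption~\ref{ass:linear}. Because revalidation and reservation consumption are atomic at a single linearization point, the equality check "supplied value \(=\) authoritative fact represented by \(\xi_c\)" is itself performed at that point. Any intervening change makes the check fail, and the step returns a non-allow decision.

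Where the adapter instead uses a fencing token, I would argue that consuming the current token gives an equivalent linearization point. Assumption~\ref{ass:crypto} is needed here too: it prevents a stale or foreign reservation, grant hash, or fence from passing the hash equalities, except with negligible probability. Combining these steps shows that the conjunction (A1)--(A6), that is \(\operatorname{Admit}(\gamma,W_{t_p},e_{t_c},S_{t_c})\), held at the validation step's linearization point.
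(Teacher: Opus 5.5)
Your proposal is correct and follows essentially the same route as the paper: control flow shows that dispatch is reachable only through an \mcode{ALLOW\_COMMIT} result, and the interface contract of \mcode{Atomic-Validate-Consume} together with Assumption~\ref{ass:linear} shows that (A1)--(A6) were evaluated against the authoritative facts, with the reservation consumed, in one atomic step. The clause-by-clause unpacking is extra detail, and the appeal to Assumption~\ref{ass:crypto} is unnecessary: \(\operatorname{Admit}\) is itself stated in terms of the hash equalities the atomic step checks, and collision resistance matters only for what those equalities imply downstream. The paper also notes a point you omit, namely that the later \mcode{COMMITTED}/\mcode{ABORTED}/\mcode{INDETERMINATE} settlement labels cannot retroactively alter this admission.
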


\begin{proof}
The only path to dispatch requires \mcode{Atomic-Validate-Consume} to return
\mcode{ALLOW\_COMMIT}. By its interface contract and
Assumption~\ref{ass:linear}, that atomic transition evaluates (A1)--(A6) over
the authoritative snapshot \(\xi_c\), verifies the full frontier/profile/state/
effect tuple, and changes the matching reservation from fresh to consumed.
Every non-admitting result exits before dispatch. The later
\mcode{COMMITTED}, \mcode{ABORTED}, and \mcode{INDETERMINATE} labels report
settlement knowledge and cannot retroactively change the admission predicate;
in particular, an indeterminate dispatch still passed the same authorization
linearization point.
\end{proof}

\begin{theorem}[Transitive effect non-amplification]
\label{thm:nonamplification}
Under Assumptions~\ref{ass:closure}--\ref{ass:linear}
and~\ref{ass:crypto}, every effect \(e_{t_c}\) admitted by ClosureBound at its
linearization time \(t_c\) is in
\[
\denote{\gamma.A}\cap\denote{A_{\mathrm{task}}(t_c)}\cap
\denote{A_{\mathrm{closure}}(\Graph_{t_c},e_{t_c})}\cap
\denote{A_{\mathrm{state}}(t_c)}\cap
\denote{A_{\mathrm{purpose/prov}}(t_c,e_{t_c})} .
\]
No ordinary dependency path adds authority beyond this intersection.
\end{theorem}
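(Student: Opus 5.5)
The plan is to reduce the statement to Lemma~\ref{lem:refinement} and then unfold clause (A4) of Definition~\ref{def:admissible} through Equation~\eqref{eq:effective}. First, fix an effect \(e_{t_c}\) that ClosureBound admits at \(t_c\). Admission means Algorithm~\ref{alg:commit} reached the fenced dispatch step. By Lemma~\ref{lem:refinement}, which relies on Assumption~\ref{ass:linear}, \(\operatorname{Admit}(\gamma,W_{t_p},e_{t_c},S_{t_c})\) held at the linearization point of \mcode{Atomic-Validate-Consume}. Its two parts give what we need. (A4) yields \(e_{t_c}\in\denote{\gamma.A}\) and \(e_{t_c}\in\denote{A_{\mathrm{cur}}(t_c,e_{t_c})}\). (A6), together with \(\operatorname{BindingIntact}\) under Assumption~\ref{ass:crypto}, ensures that the \(\gamma\) named here is the authenticated grant and not one spliced from several witnesses.

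Second, expand \(A_{\mathrm{cur}}\) using Equation~\eqref{eq:effective}. Its four operands combine by intersection, so their denotations intersect, and membership in the intersection gives membership in each of \(\denote{A_{\mathrm{task}}(t_c)}\), \(\denote{A_{\mathrm{closure}}(\Graph_{t_c},e_{t_c})}\), \(\denote{A_{\mathrm{state}}(t_c)}\) and \(\denote{A_{\mathrm{purpose/prov}}(t_c,e_{t_c})}\). The closure operand is an iterated relational meet \(\bigotimes\). Here I would invoke the denotational law from Appendix~\ref{app:proofs}, \(\denote{a\otimes b}=\denote a\cap\denote b\), applied by induction over the finite index set, to read the closure operand as the intersection of the applicable \(\denote{C_v}\). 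This is the step I expect to be the main obstacle. It must rule out two failure modes: a meet that is really a field-wise union, as in the clause-splicing counterexample, and a meet whose index set silently omits a contract. For the first, I would rely on the whole-tuple relational representation. For the second, I would use Assumption~\ref{ass:closure} together with Assumption~\ref{ass:mediation}: the adapter's provenance \(\mathcal I_{\Graph_{t_c}}(e_{t_c})\) contains every influencing contract, and whenever the provenance or a required contract is unresolved the meet is undefined and (A4) fails closed.

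Third, the operands must be the current, authoritative ones and not stale values supplied by the Skill. Assumption~\ref{ass:fresh}, clauses (A1)--(A3), and (A5) ensure that task currency, epochs, contract versions, profile and pre-state are the values at \(t_c\). Clause (A2), via Lemma~\ref{lem:separation}, ensures that \(\Graph_{t_c}\) is the exactly resolved closure with root \(\rho_\gamma\). Assumption~\ref{ass:linear} places all of these at the single point \(t_c\).

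Finally, the closing sentence follows from monotonicity of intersection. An ordinary dependency path can only add a node \(v\) to \(\mathcal I\), which contributes another conjunct \(C_v\) and shrinks the set; or it can change the closure, which breaks (A2) by Theorem~\ref{thm:noninherit}. Skill-supplied operands can only narrow \(\gamma.A\) by Theorem~\ref{thm:metadata}. Neither route enlarges the intersection.
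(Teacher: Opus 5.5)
Your proposal is correct and follows the paper's main-text proof. Lemma~\ref{lem:refinement} yields \(\operatorname{Admit}\) at \(t_c\); clause (A4) with Equation~\eqref{eq:effective} places \(e_{t_c}\) in every displayed operand; the other clauses and assumptions bind currentness and the exact closure; and nested dependencies contribute only conjuncts, never a separate authority operand. The step you flag as the main obstacle is not needed for the displayed membership, which already names \(\denote{A_{\mathrm{closure}}(\Graph_{t_c},e_{t_c})}\) as a whole, and it matters only for the closing sentence. There, your flat induction over the meet's index set via Lemma~\ref{lem:relmeet} replaces the paper's appendix induction over the SCC condensation DAG, and it suffices because \(A_{\mathrm{closure}}\) is defined directly over the full influence set \(\mathcal I_{\Graph_{t_c}}(e_{t_c})\).
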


\begin{proof}
By Lemma~\ref{lem:refinement}, every effect released to
\mcode{Dispatch-With-Fence} satisfied
Definition~\ref{def:admissible}. Complete mediation gives its normalized
\(e_{t_c}\). Complete closure and (A2) bind the exact current implementation,
authenticated influence paths, and applicable effect contracts. (A1) binds
trusted current-task authority; (A3) binds current epochs/contracts/profile;
(A4) separately checks the immutable grant ceiling and current relational meet;
(A5)--(A6) bind state, labels, reservation, and effect to the same prepared
witness. Hence the effect is in every displayed operand. A nested Skill,
package, or service is already within the bound closure and cannot supply a
separate authority operand.
\end{proof}

\statementheading{Non-conclusions}
The result is about represented mediated effects. It does not prevent covert
channels, prove source truth, or repair a compromised normalizer or receiver.

\subsection{Operation-time TOCTOU safety}

\begin{theorem}[Operation-time freshness]
\label{thm:toctou}
Under Assumptions~\ref{ass:crypto}, \ref{ass:fresh}, and~\ref{ass:linear}, a
bound closure root, expected epoch, effect contract, normalized effect, or
pre-state whose authoritative value at the commit linearization point differs
from the prepared/grant-bound value cannot be admitted using the prepared
reservation.
\end{theorem}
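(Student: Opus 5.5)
The argument is by contraposition through the monitor refinement. Suppose some bound quantity differs at the commit linearization point \(t_c\) from its prepared or grant-bound value, yet the prepared reservation \(r\) is used to reach dispatch. By Lemma~\ref{lem:refinement}, \(\operatorname{Admit}(\gamma,W_{t_p},e_{t_c},S_{t_c})\) then held at \(t_c\). I will show that each kind of mismatch falsifies a specific conjunct of Definition~\ref{def:admissible}. Assumption~\ref{ass:linear} ensures that all these conjuncts are evaluated against one authoritative snapshot \(\xi_c\) at a single instant, with the reservation consumed in the same step. Assumption~\ref{ass:fresh} ensures that the values compared are the authoritative current values, not Skill-supplied or cached ones.

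The case split follows the statement's list.

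\emph{Closure root.} If the authoritative \(\rho_{t_c}\) differs from \(W.\mcode{prepareRoot}\) or from \(\rho_\gamma\), (A2) fails directly.

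\emph{Epochs, contract, profile.} An expected epoch that is no longer current violates \(\operatorname{EpochsCurrent}(\mathbf e_{t_c},W.\mcode{epochs})\). Since \(\operatorname{BindingIntact}\) in (A6) forces \(W.\mcode{epochs}=\gamma.\mathbf e\), this covers grant-bound epochs too. A changed effect contract violates \(\operatorname{ContractCurrent}(\kappa,t_c)\) in (A3). A contract change that is material is also caught by (A2), via Lemma~\ref{lem:separation}.

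\emph{Normalized effect and pre-state.} A different normalized effect \(e_{t_c}\ne e_p\) violates \(H(e_{t_c})=W.\mcode{effect}\) in (A6). A different pre-state violates \(H(S^{pre}_{t_c})=W.\mcode{pre}\) in (A5). In both cases equality of hashes of distinct values would be a collision, which Assumption~\ref{ass:crypto} excludes except with negligible probability.

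The main obstacle is ruling out substitution: an adversary might keep the hash checks passing by presenting a different witness or grant rather than the honest prepared one. The plan is to use the final clause of \(\operatorname{BindingIntact}\). The store record \(\operatorname{Rec}(r)\) authenticates \(H(\operatorname{enc}(W_{t_p}\setminus\{r\}))\), and \(r\) is single-consumption, so it cannot be replayed after \(\operatorname{FreshReservation}\) fails. Consequently, any \(W'\) accepted with reservation \(r\) must encode to the same bytes as the prepared \(W_{t_p}\), except with negligible probability under Assumption~\ref{ass:crypto}, given unforgeability of the store and a collision-resistant \(H\). Its fields are therefore the prepared values, and the comparisons above are against the genuinely bound values.

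A second subtlety is the gap between admission and the sink. Here I will appeal to the fence clause of Assumption~\ref{ass:linear}. The adapter re-normalizes the actual request and requires the same \(H(e_c)\) under the one-shot fence, so a change introduced after \(t_c\) cannot reuse the reservation either. Every branch ends in \mcode{DENY} or \mcode{REAUTHORIZE}, which gives the claim. I will state explicitly that the result covers only values represented in \(W\) and \(\gamma\), obtained from authoritative sources.
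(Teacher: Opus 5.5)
Your proposal is correct and follows the paper's route. Both arguments rest on the commit step reacquiring the authoritative facts and evaluating them atomically. Each distinguishable mismatch in root, epochs, contract, effect or pre-state then falsifies one of (A2)--(A6), and consuming the reservation in the same step closes the check-to-use window.

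You go further in two places: you map each mismatch to its specific conjunct, and you use \(\operatorname{BindingIntact}\) to rule out substituting a different witness or grant. The paper leaves both implicit. Conversely, the paper explicitly notes that ABA histories mapped back to the same authoritative value fall outside the claim.
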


\begin{proof}
Algorithm~\ref{alg:commit} reacquires every volatile fact at commit. A closure,
effect, label, epoch, contract, profile, canonicalizer, or pre-state change
that remains distinguishable at the linearization point violates (A2)--(A6).
Transient ABA histories that authoritative evidence intentionally maps back to
the same identity are outside this value-freshness claim; deployments requiring
ABA detection must use non-reused epochs. The opaque snapshot token and
\mcode{Atomic-Validate-Consume} compare the complete current tuple and consume
the reservation in one atomic transition, preventing a concurrent update from
validating against old facts after the check. Without atomicity or fencing, the
theorem's premise does not hold.
\end{proof}

\subsection{Equivalent-effect path invariance}

\begin{theorem}[Supported path invariance]
\label{thm:path}
Assume effect mediation is surface-complete, adapters are extensionally sound
with respect to \(\Omega\), and their registry group satisfies
Definition~\ref{def:policy-congruence}. If \(x_1,x_2\) are supported surface
events evaluated under the same authenticated grant, closure, state, and
linearization time, and
\(\Norm(x_1,S)\equiv_\Omega\Norm(x_2,S)\), ClosureBound evaluates them against
the same effective policy denotation and neither path can bypass the
closure/currentness predicates.
\end{theorem}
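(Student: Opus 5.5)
The argument splits into two parts: \emph{same denotation} and \emph{no bypass}. Fix the authenticated tuple \((\gamma,\Graph,S,t)\) shared by both events. Write \(e_i=\Norm(x_i,S)\) for \(i=1,2\).

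\textbf{Both events normalize.} Since \(x_1,x_2\) are supported, surface-complete mediation (Assumption~\ref{ass:mediation}) routes each through a registered adapter. Because both adapters belong to the same registry group, their normalizers are defined, so \(e_1,e_2\ne\bot\). By hypothesis \(e_1\equiv_\Omega e_2\) in the sense of Definition~\ref{def:effect-equivalence}.

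\textbf{Same policy denotation.} Definition~\ref{def:policy-congruence} then applies directly to this pair. It gives \(\operatorname{authKey}(e_1)=\operatorname{authKey}(e_2)\) and
\[
\denote{\textstyle\bigotimes\mathcal B(\gamma,\Graph,S,t,e_1)}=\denote{\textstyle\bigotimes\mathcal B(\gamma,\Graph,S,t,e_2)}.
\]
I would next unfold Equation~\eqref{eq:effective} together with the immutable ceiling \(\gamma.A\). The final ceiling \(\denote{\gamma.A}\cap\denote{A_{\mathrm{cur}}(t,e)}\) is exactly the meet of the operand multiset \(\mathcal B\). The binary relational meet law in Appendix~\ref{app:proofs} identifies the denotation of an iterated meet with the intersection of the operand denotations. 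Hence the effective denotation evaluated in (A4) is the same set for both paths.

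\textbf{Remaining admission clauses.} The other fields consumed by (A1)--(A6) are recipient, purpose, task atom, contract version, pre-state digest, idempotency scope, and so on. All of them lie in \(\operatorname{authKey}\), which excludes only the surface name. So every clause of Definition~\ref{def:admissible} receives identical inputs, apart from the bound hash \(H(e_i)\) in (A6). That hash is checked against each path's own prepared witness and does not enlarge authority. This gives the first conclusion: the two paths face the same admission predicate up to surface-independent data.

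\textbf{No bypass.} By Lemma~\ref{lem:refinement}, whichever path reaches \mcode{Dispatch-With-Fence} satisfied \(\operatorname{Admit}\) at its linearization point. In particular it satisfied (A2), the exact closure root, and (A3)/(A5), currentness. Consequently Theorem~\ref{thm:nonamplification} and Theorem~\ref{thm:toctou} apply verbatim to either path. The surface name enters neither the closure commitment (Lemma~\ref{lem:determinism}, Lemma~\ref{lem:separation}) nor the policy operands. Therefore switching from \(x_1\) to \(x_2\) can neither drop a closure or currentness conjunct nor select a weaker contract.

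\textbf{Main obstacle.} The delicate step is the identification of the evaluated ceiling with \(\bigotimes\mathcal B\). In particular, \(A_{\mathrm{closure}}\) depends on \(\mathcal I_{\Graph}(e)\), which is obtained from adapter-emitted provenance. Two equivalent surfaces could in principle report different influence paths, and hence different contract sets. I would handle this by relying on the congruence clause that equates the whole selected multiset's denotation, rather than trying to show the influence sets are equal. I would also note explicitly that unsupported surfaces and \(\Norm=\bot\) fall outside the hypothesis and fail closed.
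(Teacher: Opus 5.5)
Your proposal is correct and follows the paper's own argument. Policy-selection congruence gives the same \(\operatorname{authKey}\) and the same denotation of the whole operand bundle; your identification of \(\denote{\gamma.A}\cap\denote{A_{\mathrm{cur}}(t,e)}\) with \(\denote{\bigotimes\mathcal B}\) through the relational-meet law spells out what the paper leaves implicit. The no-bypass half is that (A1)--(A3) and (A5)--(A6) are enforced on every registered path, which you obtain via Lemma~\ref{lem:refinement}.

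One overstatement is harmless but should be dropped: the label \(\lambda\) and each path's reservation are per-path inputs outside \(\operatorname{authKey}\), so ``every clause receives identical inputs'' is more than you have or need.
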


\begin{proof}
Both surfaces enter Definition~\ref{def:admissible}. Registered
policy-selection congruence gives the same authorization key and the same
denotation for the complete applicable policy bundle, including the
effect-dependent influence/contract selector. Conditions (A1)--(A3) and
(A5)--(A6) are enforced for every registered path and bind the root, epochs,
state, reservation, profile, labels, and normalized effect. Therefore
switching supported paths cannot enlarge the authorization set.
\end{proof}

\statementheading{Non-conclusions}
ClosureBound does not discover arbitrary semantic equivalence. The theorem
excludes an unregistered sink and fails closed when a general shell/browser
event cannot be normalized before durability.

\subsection{Necessity counterexamples}

The conjunction is not cosmetic. Each reduced mechanism admits a short
counterexample:

\begin{description}[leftmargin=1.2em]
  \item[Root-only identity.] Keep root bytes fixed and update a nested package
  or service deployment.
  \item[Closure only at load.] Prepare under \(\rho_0\), update a helper, and
  commit without re-resolution.
  \item[Effect-only commit.] Keep the payment IR within its ceiling but change
  the implementation to an unauthorized analytics-bearing closure.
  \item[Closure-only commit.] Keep the closure fixed and switch the recipient
  or amount through a mutable event.
  \item[Surface-name policy.] Deny a direct payment tool but reach the same
  payment through browser or shell.
  \item[Uncorrelated field union.] Combine purpose from one clause, recipient
  from another, and amount from a third.
\end{description}

\begin{proposition}[No represented reduced mechanism family is sufficient]
\label{prop:necessity}
For each mechanism family removed by a reduced profile in
Table~\ref{tab:profiles}, and for each single-point mutant in
Table~\ref{tab:mutants}, there exists a finite trace on which that weakened
evaluator admits an effect excluded by the full joint predicate.
\end{proposition}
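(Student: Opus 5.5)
The plan is a constructive, case-by-case witness argument. The statement is existential and finite: for each reduced profile and each single-point mutant, I exhibit one trace together with an effect that the weakened evaluator admits and the full predicate of Definition~\ref{def:admissible} rejects. The six counterexamples listed just above the proposition provide the templates. Tables~\ref{tab:profiles} and~\ref{tab:mutants} are not shown before the statement, so I assume each removed family corresponds to dropping or weakening one conjunct or binding. Concretely, I map each row to one of the following, with several mutants possibly sharing a family:
\begin{itemize}
  \item (A2) replaced by a root-only or load-time check;
  \item (A2) not re-evaluated at commit;
  \item the closure operand of (A4) dropped, giving effect-only commit;
  \item (A6) effect binding dropped, giving closure-only commit;
  \item the path registry replaced by a surface-name policy, violating Definition~\ref{def:policy-congruence};
  \item the relational meet replaced by a field-wise union;
  \item (A3), (A5), or the nonce/reservation freshness dropped, handled by the standard replay or stale-epoch traces.
\end{itemize}

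For each row I fix a single small universe: the \mcode{invoice-pay}/\mcode{bank-wire} example with one grant \(\gamma\), closure root \(\rho_0\), and ceiling \(\gamma.A\). The trace runs issue \(\gamma\), then prepare under \(\rho_0\), then one adversary step, then commit. The adversary step is chosen to falsify exactly the removed conjunct while leaving every retained conjunct true.

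The first half of each case shows that the full predicate rejects. For the closure-changing cases I invoke Lemma~\ref{lem:separation}: the helper update gives \(\rho_1\ne\rho_0\), so (A2) fails, and Theorem~\ref{thm:noninherit} already covers this. For the effect-changing cases, the recipient swap changes \(H(e_{t_c})\), so (A6) fails. For the surface switch, the shell or browser event normalizes to the same IR, so Theorem~\ref{thm:path} places it under the same denotation, which excludes it. For field splicing I use the relational-meet argument from the appendix: \((\mathsf{billing},\mathsf{customerB},\$500)\) lies in neither grant clause, so the full tuple test in (A4) fails. For replay, the consumed reservation falsifies \(\operatorname{FreshReservation}\) in (A5).

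The second half of each case shows that the weakened evaluator admits. This is a direct check that every surviving conjunct holds on the constructed state, since the adversary step touched only the removed component.

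The main obstacle is this second half, which requires \emph{independence}: the perturbation must not incidentally falsify a conjunct that the weakened evaluator keeps. The conjuncts are entangled. For example, (A6) includes \(\operatorname{BindingIntact}\), which hashes the whole witness, and the root also appears in (A2). A naive helper update in the ``effect-only commit'' case would therefore also trip (A2) if that conjunct were retained. I would handle this by specifying, for each profile, precisely which predicates are replaced, and then choosing the perturbation to act only on values that the weakened evaluator never reads. In the effect-only case, the weakened evaluator reads only \(e\) and \(\gamma.A\), so the changed closure is invisible to it.

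Where a mutant alters a hash input rather than removing a conjunct, for instance by omitting \(h_V(r_s)\) or collapsing multiplicity in \(\operatorname{sort}\), I would construct two graphs that collide under the mutant's encoding but are distinct canonically. One example is two roots inside one SCC; another is a duplicated parallel edge. Lemma~\ref{lem:determinism} and Lemma~\ref{lem:separation} then show that the full commitment separates these graphs. Finiteness of each trace is immediate because every construction uses a bounded number of nodes and steps.
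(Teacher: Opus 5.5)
Your plan matches the paper's: one finite witness per reduced profile and per mutant, built from the six listed constructions plus replay, expiry and stale-state cases, with the full predicate failing at the relevant conjunct of (A1)--(A6). The step that fails as written is ``(A3) dropped, handled by the standard stale-epoch traces,'' applied to the omit-commit-epoch mutant in Table~\ref{tab:mutants}. That mutant keeps every other check, including commit-root validation. In the formal model, epochs are fields of the committed records ($\epsilon$ in $q_t(v)$, $\epsilon_E$ in $q^E_t$). A redeploy of a material service therefore changes its $h_V$ and, by Lemma~\ref{lem:separation}, the re-resolved $\rho$. The retained (A2) then rejects, so the mutant admits nothing on that trace.

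Your own remedy is to perturb only values the weakened evaluator never reads. Here that requires an entry of $\mathbf e$ that no committed node or edge record contains, and the model neither supplies nor forbids one. This is exactly the per-condition independence across (A1)--(A6) that the paper sets aside as a separate question. The paper instead takes this row from the executed artifact trace, which can exist only because the represented epoch drift leaves unchanged the root that the mutant still checks. To close the gap, either exhibit such an uncommitted epoch explicitly, or rest that row on the recorded trace as the paper does.

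The same care is needed in rows you treat only generically. Table~\ref{tab:mutants} has no hash-encoding mutants, so your collision constructions are unnecessary. It does have ``project out material package,'' which fits your (A2) template: update the projected package, so the full resolver separates by Lemma~\ref{lem:separation} while the projected commitment does not change.

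The effect-only profile drops more than the closure operand of (A4). It omits (A2) together with the pre-state, effect-stability, evidence-profile and canonicalizer checks. It still reads exact epochs, task/time status and reservation freshness, so its implementation change must leave $\mathbf e$ fixed. For omit-commit-root, the helper update must leave the normalized effect, label and epochs unchanged, and keep $e_{t_c}$ inside the recomputed $A_{\mathrm{closure}}$. The introduction's helper, which rewrites the recipient, would instead be caught by the retained (A4) and (A6).
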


\begin{proof}
The six constructions above cover the profile-level mechanism families.
Replay, grant forgery/expiry, and stale pre-state cases exercise additional
atomic conditions. Each seeded mutant changes exactly one implementation
mechanism while retaining the full-profile evaluator elsewhere; the artifact
executes and records the corresponding trace. The traces establish
mechanism-family necessity; per-condition independence across (A1)--(A6) and
exhaustive attack enumeration are separate questions.
\end{proof}

\section{Reference Implementation}
\label{sec:implementation}

\subsection{Scope}

The artifact is a deterministic Node.js reference kernel with no hosted model,
network dependency, or private data. It implements canonical graph
commitments, finite effect clauses, epoch checks, fixture execution, mutants,
bounded exploration, aggregation, and consistency
verification. Reservation freshness and single consumption are modeled as
evaluator facts and lifecycle transitions; the artifact does not implement a
production \mcode{Reserve}/\mcode{Atomic-Validate-Consume} transaction.

The reference model is a publication-oriented research implementation over
synthetic records, separate from Accentrust production systems. It contains no
private adapter, customer or tenant record, production endpoint or schema,
credential, deployment topology, operational policy, or incident trace.

The ClosureBound evaluation implementation and its machine-readable execution
traces are retained internally and are not included in the arXiv source bundle.
Appendix~\ref{app:artifact} states the disclosed configuration and resulting
availability boundary.

The reference profile abstracts production signature verification, package
resolution, MCP servers, browser drivers, and distributed fencing.
Cryptographic records use deterministic SHA-256 digests and explicit
\mcode{authenticated} flags. This design isolates binding logic; key management
and provider security are outside the evaluated properties.

\subsection{Canonical records}

Node records use a fixed JSON canonicalizer with lexicographically sorted
object keys, explicit array-order rules, domain tags, and rejection of
non-finite numbers or unsupported values. Graph arrays are sorted only after
node and edge identity has been fixed. Resolver handles are excluded from node
digests; authenticated resolution-context digests remain included. Duplicate
canonical nodes are interned; edge multiplicity and kind remain explicit.

The reference effect IR supports file writes, messages, payments, database
mutations, deployments, and permission changes. Each clause binds operation,
exact target, recipient, data class, a nonnegative amount ceiling, payload
digest, postcondition contract, purpose, exact task atom, effect-contract
version, durability, and idempotency scope. The fixtures use synthetic
identifiers and payload digests only.

\subsection{Profiles and mutants}

Table~\ref{tab:profiles} defines mechanism families rather than reproductions
of cited systems. The labels describe which checks are intentionally present;
they must not be read as empirical results for SIGIL, SkillGuard, VIGIL,
CommitGuard, or another implementation.

\begin{table}[t]
\caption{Transparent reference profiles. A check mark means that the finite
reference evaluator performs the dimension.}
\label{tab:profiles}
\small
\begin{tabularx}{\linewidth}{@{}L{0.24\linewidth}cccccY@{}}
\toprule
Profile & Root & Closure & Effect & Commit refresh & Correlation & Purpose \\
\midrule
Metadata/manifest only & -- & -- & -- & -- & -- & Negative control \\
Root identity & \(\checkmark\) & -- & -- & -- & -- & Flat-root binding \\
Load-time closure & \(\checkmark\) & \(\checkmark\) & -- & -- & -- & Exact load snapshot \\
Effect-only commit & -- & -- & \(\checkmark\) & \(\checkmark\) & \(\checkmark\) & Effect boundary \\
Prepared closure + effect & \(\checkmark\) & \(\checkmark\) & \(\checkmark\) & -- & \(\checkmark\) & Composed prepare binding without commit-time closure/effect/state refresh \\
\ClosureBound & \(\checkmark\) & \(\checkmark\) & \(\checkmark\) & \(\checkmark\) & \(\checkmark\) & Full joint predicate \\
\bottomrule
\end{tabularx}
\end{table}

In Table~\ref{tab:profiles}, ``commit refresh'' marks a reduced profile that
performs any operation-time currentness checks, not necessarily the complete
frontier. The effect-only profile rechecks the current normalized effect against
the grant, exact epochs, task/time status, and reservation freshness, but omits
closure, authoritative pre-state, effect-stability, evidence-profile, and
canonicalizer checks. Only \ClosureBound performs the complete joint refresh.

Six seeded semantic mutants each weaken one full-profile mechanism: project
away one material transitive package, omit commit-epoch validation, omit
commit-root validation, bypass normalization for one trusted surface name,
replace relational clauses with field unions, or omit reservation freshness.
For every mutant, the unmodified full evaluator first rejects the same trace;
the mutant is counted as detected only if its single weakening then reaches the
targeted unsafe commit.

\subsection{Complexity}

For \(n=|V|\), \(m=|E|\), and total evidence bytes \(L_{\mathrm{evid}}\),
resolution and SCC construction are \(O(n+m+L_{\mathrm{evid}})\) plus adapter
lookup costs. Sorting commitments is
\(O(n\log n+m\log m)\). With immutable-node caching, re-resolution hashes only
changed evidence but still validates the authenticated currentness frontier.
Policy checking is linear in the number of non-dominated correlated clauses in
the simple reference representation.

Worst-case graph and clause growth are real. A production implementation should
use canonical antichains, exact dominance elimination, immutable subgraph
roots, and fail-closed profile limits. It must not handle exhaustion by
silently dropping dependencies or unioning restrictive clauses.

\section{Evaluation}
\label{sec:evaluation}

\subsection{Questions and evidence discipline}

The evaluation asks five bounded questions:

\begin{itemize}
  \item \textbf{RQ1: Conformance.} Does the reference profile make the expected
  decision on every frozen represented lifecycle case?
  \item \textbf{RQ2: Necessity.} Do reduced profiles and seeded semantic mutants
  expose the counterexamples predicted by Proposition~\ref{prop:necessity}?
  \item \textbf{RQ3: State safety.} Does exhaustive exploration of a declared
  finite lifecycle find an invariant violation in the full profile, and does
  it find witnesses for mutants?
  \item \textbf{RQ4: Local cost.} What is the narrow in-process cost of
  canonical commitment and the joint gate on one disclosed host?
  \item \textbf{RQ5: External structure.} Under a frozen, explicitly lexical
  audit, what local and cross-root path structure appears in two public Skill
  repository snapshots?
\end{itemize}

Questions about model adherence, hidden-dependency prevalence, real-runtime
sink coverage, and task utility require complementary empirical methods. The
external audit supplies repository-structure evidence and a baseline for
adapter-based runtime studies.

\subsection{Fixture construction}

The frozen suite has 40 synthetic cases: 32 unsafe rows across eight
four-case families and eight benign controls. Every row fixes:

\[
\langle G_{\mathrm{grant}},G_{\mathrm{prepare}},G_{\mathrm{commit}},
\gamma,x_{\mathrm{prepare}},x_{\mathrm{commit}},S,\textsf{expected}\rangle .
\]

The unsafe families are:

\begin{enumerate}
  \item root Skill or root-instruction mutation;
  \item nested Skill or package mutation with unchanged root;
  \item service-deployment or effect-contract epoch drift;
  \item explicitly inexact resolution, a floating version, unauthenticated
  node evidence, or a declared node made unreachable;
  \item prepare/commit closure or current-state races;
  \item amount, recipient, data-class, or effect-contract-version amplification;
  \item out-of-ceiling equivalent-effect paths through helper, shell, or browser
  surfaces; and
  \item a consumed reservation, revoked task, expired grant, or
  correlated-clause splicing.
\end{enumerate}

The eight benign controls cover canonical discovery order, canonical path
spelling, alpha-renamed resolver handles, an exact-ceiling payment, an
authorized helper path, a dependency update accompanied by a new grant, a fresh
service epoch, and one non-durable read-only effect. A control is counted
correct only if it reaches its expected allow/reauthorize/deny state; ``safe
because everything was denied'' would fail the suite.

\subsection{Kernel contract checks}

All 18 deterministic kernel contracts pass. They cover order and path
canonicalization, material-node changes, cross-SCC endpoint rewiring, root
changes inside an SCC, unreachable-node rejection, authenticated grant and
task/purpose binding, correlated-clause non-splicing, equivalent-path
normalization, unsupported-surface denial, cross-SCC edge-evidence changes, and
evidence-profile/canonicalizer downgrade. They also check that local-handle
alpha-renaming leaves the root unchanged and duplicate canonical-node handles
are interned, expected epoch maps require exact key/value equality, payload and
pre-state changes alter effect identity, and authenticated resolution context
alters node identity. K04 and K05 are direct regression
tests for the two collision-free equal-root constructions found during the
independent formal audit; K12 verifies that cross-component edge evidence, not
only its endpoints, enters \(\rho\); K14 and K15 guard representation
invariance; K16--K18 guard exact currentness and effect/context identity.

\subsection{Conformance and reduced profiles}

Table~\ref{tab:conformance} is generated from the aggregate JSON. The full
profile matched all 40 expected outcomes: all 32 unsafe rows were denied or
sent to reauthorization before durability, and all eight benign controls
received their exact expected decision. It had zero unsafe allow, false denial,
or decision mismatch on this finite suite.

\begin{table}[t]
\caption{Finite fixture results. Unsafe allow is the security error;
false denial counts benign rows denied contrary to their exact oracle.
Profiles are mechanism classes, not reproduced cited systems.}
\label{tab:conformance}
\scriptsize
\setlength{\tabcolsep}{3pt}
\begin{tabularx}{\linewidth}{@{}L{0.25\linewidth}rrrrY@{}}
\toprule
Profile & Exact & Unsafe allow & False deny & Decision err. & Main omitted protection \\
\midrule
Metadata/manifest only & 8 & 32 & 0 & 32 & Trusted recursive identity and effect gate \\
Root identity & 14 & 26 & 0 & 26 & Transitive closure and normalized effect \\
Load-time closure & 20 & 20 & 0 & 20 & Operation-time refresh and effect binding \\
Effect-only commit & 24 & 16 & 0 & 16 & Artifact/closure identity \\
Prepared closure + effect & 32 & 8 & 0 & 8 & Commit-time freshness/replay fencing \\
\ClosureBound & 40 & 0 & 0 & 0 & None within represented profile \\
\bottomrule
\end{tabularx}
\end{table}

These counts should not be ranked as detector accuracy. Cases are authored
litmus tests with exact structural oracles, and profile differences follow from
deliberately omitted checks.

\subsection{Mutation evidence}

All six seeded mutants produced a targeted unsafe witness:

\begin{table}[t]
\caption{Single-point full-profile mutants and minimized witness shape.}
\label{tab:mutants}
\small
\begin{tabularx}{\linewidth}{@{}L{0.28\linewidth}L{0.26\linewidth}Y@{}}
\toprule
Mutant & First violated invariant & Minimal witness \\
\midrule
Project out material package & I2 exact closure & Full resolver reauthorizes; projected resolver/grant omits changed package and commits \\
Omit commit epoch & I4 current operation & Full gate reauthorizes after redeploy; mutant accepts prepared epoch \\
Omit commit root & I2 exact closure & Full gate reauthorizes after helper update; mutant accepts prepared root \\
Trust shell surface & I5 path invariance & Full clause gate rejects; named shell path bypasses normalized policy \\
Union clause fields & I3 non-amplified effect & Full relation rejects; mutant combines recipient and amount from distinct clauses \\
Omit reservation freshness & I4 current operation & Full gate denies consumed nonce; mutant commits its replay \\
\bottomrule
\end{tabularx}
\end{table}

The \(6/6\) mutation score is targeted mechanism evidence for the six declared
faults, rather than general coverage of implementation errors.

\subsection{Bounded lifecycle exploration}

The explicit-state model contains phases
\[
\{\mathsf{idle},\mathsf{resolved},\mathsf{granted},
\mathsf{prepared},\mathsf{committed},\mathsf{denied},
\mathsf{replayDenied}\},
\]
two closure versions, two service epochs, two effect ceilings, two
purpose/recipient correlations, two surface paths, one revocation transition,
authenticated versus forged grants, and a single-consumption reservation.
From the initial state it enumerates every enabled update, resolve, grant,
prepare, revoke, commit, denial/restart, and replay-attempt transition until the
BFS queue is empty; no depth cutoff is used. Every profile permits the
environment to attempt replay: the full monitor reaches
\(\mathsf{replayDenied}\), while the reservation mutant accepts the second use
and records an I4 witness.

The full profile reached a fixed point after 84,608 unique states and 530,752
transitions; the maximum shortest-path depth was 23 and the remaining frontier
was zero. It found no violation of I1--I5, including untrusted-grant and
equivalent-surface states. Each of the six mutant profiles reached a violation,
and the artifact records its first breadth-first witness. These finite profiles
abstract the same six mechanism families as the executable mutants. Exploration
is exhaustive within the encoded finite abstraction; package-manager and
remote-service behavior require separate models. The manuscript theorems,
rather than the BFS count, support the general conditional claims.

\subsection{Local mechanism cost}

The microbenchmark runs after warm-up in one Node.js process on an Apple M4 Max
host. Inputs are preconstructed in memory; network access, package resolution,
filesystem reads, signatures, remote attestations, model inference, durable
transactions, and receiver settlement are excluded.

For 100,000 iterations over the fixed small fixture, the frozen reported run's
median in-process closure commitment is 0.045\,ms (p95 0.055\,ms), pure effect
normalization plus its correlated-clause gate is 0.002\,ms (p95 0.003\,ms),
and a full-profile fixture evaluation is 0.120\,ms (p95 0.146\,ms). The last
operation includes two closure commitments and two normalizations; the three
measurements are not additive pipeline stages. They establish that the
reference kernel is mechanically lightweight on the disclosed host. The result
characterizes in-process mechanism cost, excluding authoritative resolution,
distributed fencing, and other production operations.

\subsection{External repository-snapshot closure audit}

To add evidence independent of the authored lifecycle fixtures, we froze two
public distributions: 19 top-level Skills under \mcode{skills/*/SKILL.md} at
Anthropic repository commit
\mcode{3b3fad96} and 530 non-fixture
\mcode{plugins/**/skills/**/SKILL.md} files at OpenAI repository commit
\mcode{1e285826}
~\cite{anthropic2026skillsrepo,openai2026pluginsrepo}. The two repository-layout
adapters recursively enumerate regular files and symlinks below each selected
Skill root without following symlinks. They separately classify unique Markdown
link targets relative to that root and count a bundled file as lexically named
only when its full root-relative path occurs verbatim in \mcode{SKILL.md}.

\begin{table}[t]
\caption{External static closure audit at two fixed public commits. ``All paths
named'' is evaluated only among roots with bundled files. It is a strict lexical
measure, not semantic dependency recall.}
\label{tab:external-corpus}
\small
\begin{tabularx}{\linewidth}{@{}L{0.20\linewidth}rrrrY@{}}
\toprule
Source & Skills & Unique files & Bundled roots & All paths named & Roots with cross-root links \\
\midrule
Anthropic Skills & 19 & 411 & 18 & 6 & 0 \\
OpenAI Plugins & 530 & 4,461 & 508 & 15 & 67 \\
\textbf{Total} & \textbf{549} & \textbf{4,872} & \textbf{526} & \textbf{21} & \textbf{67} \\
\bottomrule
\end{tabularx}
\end{table}

The 549 roots contain 5,220 root--file instances but 4,872 unique local files
because some selected Skill roots nest other Skill roots. Among the 526 roots
with at least one non-\mcode{SKILL.md} file, only 21 name every such relative
path verbatim. Across 1,341 unique Markdown targets, 208 links in 67 Skills
resolve inside the same repository but outside the immediate Skill root; 17 are
absolute-path targets and 17 do not resolve in the frozen checkout. These last
categories are not automatically defects: a path may be documentation-rooted,
generated at use time, or intentionally external. None of the 549 frontmatters
contains a \mcode{dependencies} field, which is consistent with the current
format rather than a conformance failure.

The result supports a structural premise: hashing \mcode{SKILL.md} or assuming
exhaustive path mention is insufficient to reconstruct exact operational
closure in general. Lexical mention is neither necessary nor sufficient for
runtime loading, and repository enumeration excludes packages, services,
tools, generated paths, and runtime effects. The snapshots therefore add
externally authored structural evidence to the mechanism study and motivate
runtime validation of interoperability, prevalence, and discovery completeness.

\subsection{Validity and evaluation traceability}

\paragraph{Internal validity.}
Fixtures and implementation were authored together. Exact per-row oracles,
transparent profiles, semantic mutants, structural traces, and an artifact
verifier strengthen internal validity by exposing decision and binding errors.
The external corpus supplies independently authored inputs; its extractor and
lexical measures remain part of our study design.

\paragraph{External validity.}
The supported inference from the two frozen repositories is public distribution
structure at two commits. Production clients, package ecosystems, MCP
deployments, models, marketplaces, and runtime behavior require separate
empirical sampling; dependency completeness and sink coverage remain
deployment-specific.

\paragraph{Construct validity.}
``Unsafe allow'' means a represented durable effect violated the frozen joint
predicate. It is not a judgment that a Skill is malicious. ``False denial''
is interpreted against the authored benign oracle and does not estimate broad
user utility.

\paragraph{Evaluation traceability.}
The internal evaluation records Node/OS/CPU metadata, fixture and source
digests, profile counts, per-case traces, bounded counterexamples, and
microbenchmark parameters. The reviewed snapshot was rerun against frozen
inputs, and an internal verifier checked both result sets and cross-checked
every manuscript count. These controls establish snapshot identity and internal
consistency.

\section{Closest-Work Comparison}
\label{sec:comparison}

Table~\ref{tab:closest} compares guaranteed axes, not headline system names.
A check mark means the cited work directly supplies the property according to
its published description; a circle means partial or adjacent support.

\begin{table}[t]
\caption{Closest-work capability matrix. ``Exact heterogeneous closure''
requires authenticated operational Skill/package/tool/model/service structure;
``joint witness'' binds that closure and realized effect in one durability-time
authorization record.}
\label{tab:closest}
\scriptsize
\setlength{\tabcolsep}{2.4pt}
\begin{tabularx}{\linewidth}{@{}L{0.20\linewidth}cccccY@{}}
\toprule
Work & \shortstack{Exact hetero.\\artifact closure} &
\shortstack{Runtime\\authority} &
\shortstack{Durability\\refresh} &
\shortstack{Realized-effect\\normalization} &
\shortstack{Joint\\witness} & Primary scope \\
\midrule
Skills Are Not Islands~\cite{jia2026islands} & \(\circ\) & -- & -- & -- & -- & Measurement and dependency management \\
SkillFortify~\cite{bhardwaj2026skillfortify} & \(\circ\) & \(\circ\) & -- & -- & -- & Static analysis, SAT resolution, lockfile \\
OpenAPM v0.1~\cite{microsoft2026openapm} & \(\circ\) & -- & -- & -- & -- & Install-time package/lock/policy draft \\
SIGIL~\cite{shen2026sigil} & \(\circ\) & \(\circ\) & -- & -- & -- & Audit-to-load integrity \\
SkillGuard~\cite{pan2026skillguard} & \(\circ\) & \(\checkmark\) & \(\circ\) & \(\circ\) & -- & Skill permission framework \\
Edge Skillguard~\cite{zhan2026autopolicy} & \(\circ\) & \(\checkmark\) & \(\circ\) & \(\circ\) & -- & Physical-world state/sensor guards \\
VIGIL~\cite{li2026vigil} & \(\circ\) & \(\checkmark\) & \(\circ\) & \(\circ\) & -- & Trace behavioral enforcement \\
SkillScope~\cite{wu2026skillscope} & \(\circ\) & \(\checkmark\) & \(\circ\) & \(\circ\) & -- & Task-conditioned action graph \\
Dynamic Capabilities~\cite{zhou2026dynamic} & \(\circ\) & \(\circ\) & \(\circ\) & -- & -- & Manifest certificate and ledger \\
AIRGuard~\cite{qin2026airguard} & -- & \(\checkmark\) & \(\circ\) & \(\circ\) & -- & Pre-action normalized authority control \\
Cordon~\cite{chen2026cordon} & \(\circ\) & \(\checkmark\) & \(\checkmark\) & \(\circ\) & \(\circ\) & Transaction lineage/state/effect validation \\
CommitGuard~\cite{santos2026temporary} & -- & \(\checkmark\) & \(\checkmark\) & \(\circ\) & \(\circ\) & General durable-effect authority \\
\ClosureBound & \(\checkmark\) & \(\checkmark\) & \(\checkmark\) & \(\checkmark\) & \(\checkmark\) & Exact closure/effect non-inheritance \\
\bottomrule
\end{tabularx}
\end{table}

\paragraph{Direct complementarity and novelty boundary.}
SkillFortify and OpenAPM already supply substantial dependency-resolution,
lockfile, typed-package, and install-policy machinery; AIRGuard already
normalizes heterogeneous tool actions for current authority; Cordon already
validates lineage, authority, staged state, and pending effects as a composed
transaction; and CommitGuard already establishes durability-time authority
refresh. Edge Skillguard already adds typed world-state and sensor-evidence
guards to physical-world Skills. ClosureBound's novelty lies in the joint
construction: to the best of our
knowledge, prior work has not jointly bound an authenticated heterogeneous
operational-artifact closure and a cross-surface normalized \emph{realized}
effect in one durability-time authorization witness, nor established both
closure-version non-inheritance and equivalent-effect path invariance under
explicit completeness assumptions.

\section{Related Work}
\label{sec:related}

\subsection{Agent Skill security and measurement}

Large-scale analysis of 31,132 Skills found vulnerabilities across prompt
injection, exfiltration, privilege escalation, and supply-chain categories
~\cite{liu2026wild}. Lifecycle taxonomies emphasize missing data/instruction
boundaries, persistent trust, and marketplace review
~\cite{li2026towards}; SkillSec-Eval extends lifecycle evaluation through
repository, retrieval, planning, execution, and evolution over 327 reported
real-world Skills~\cite{badhe2026skillsec}. MalSkillBench contributes
runtime-verified malicious and
matched benign Skills, while AgentTrap measures complete trajectories and
routine-looking unsafe side effects
~\cite{guo2026malskill,zhuang2026agenttrap}. Runtime Skill Audit dynamically
probes risk-relevant contexts~\cite{lan2026runtime}. SkillDetonate lifts
instructions observed during sandboxed execution and tracks OS-boundary
information flow; its ``closure'' is runtime instruction/flow coverage, not a
version commitment~\cite{ji2026detonate}. Composition studies further show that
benign-looking Skills can become harmful along activated paths
~\cite{xie2026composition}. These works establish the threat and evaluation
ecosystem. ClosureBound is not a detector and does not reuse their rates as its
evidence.

Natural-language surfaces create distinct risks. Under the Hood of
\mcode{SKILL.md} studies discovery, selection, and governance manipulation;
payload-less Skills induce harmful generated behavior without recognizable
payloads; BadSkill hides backdoors in model artifacts
~\cite{saha2026hood,liu2026payloadless,tie2026badskill}. These results motivate
metadata non-authority, model nodes, and effect mediation. They do not imply
that exact identity makes harmful content safe.

\subsection{Skill permissions, least privilege, and behavioral enforcement}

SkillGuard connects manifests, context influence, action permissions, user
authorization, and runtime monitoring~\cite{pan2026skillguard}. SkillScope
extracts task-conditioned action graphs and constrains over-privileged paths
~\cite{wu2026skillscope}. VIGIL compiles temporal, argument, and value-flow
specifications to finite-trace enforcement~\cite{li2026vigil}. A distinct
SIGIL system compiles natural-language Skills into typed executable harnesses
and evaluates mandate compliance across public Skills and three runtime models
~\cite{dantanarayana2026sigilcompiler}. Edge Skillguard co-packages typed
invocation policy over world state and sensor evidence for physical actuation
~\cite{zhan2026autopolicy}. These systems
govern what behavior is permitted. ClosureBound addresses whether the exact
recursive implementation and effect contract being used now are the ones for
which that permission was granted.

SIGIL binds audited Skill artifacts to a mandatory verified loader
~\cite{shen2026sigil}; dynamic-capability certificates bind a manifest hash to
agent identity~\cite{zhou2026dynamic}. ClosureBound builds on both
artifact-binding contributions. Its later enforcement point and typed
transitive/effect joint witness address authority transfer after loading.

\subsection{Software supply-chain integrity}

TUF secures software-update metadata under key compromise; in-toto verifies
intended supply-chain steps and materials; SLSA specifies increasingly strong
provenance and build guarantees
~\cite{cappos2010tuf,torresarias2019intoto,slsa2026}. SPDX and CycloneDX
represent components, services, and dependency relationships
~\cite{spdx2025,cyclonedx2025}. ClosureBound uses these as evidence formats and
locates its contribution in operation-time authorization rather than SBOM,
provenance, lockfile, attestation, or Merkle construction.
SkillFortify formalizes an Agent Dependency Graph, capability-aware SAT
resolution, deterministic Skill lockfiles, and ASBOMs
~\cite{bhardwaj2026skillfortify}; its August 2026 v2 also corrects bibliography,
external claims, and the empirical boundary of its information-flow analysis.
The OpenAPM v0.1 working draft specifies
transitive package/MCP resolution, typed primitives, lockfiles, tree hashes, and
install-time governance while explicitly leaving runtime harness APIs out of
scope~\cite{microsoft2026openapm}. ClosureBound begins after these
contributions: it asks whether a trusted task grant remains valid for the
authenticated operational closure and realized effect at durability.

\subsection{Runtime and usage control}

Reference-monitor complete mediation and least privilege are classical design
principles~\cite{saltzer1975protection}. Security automata characterize
execution monitoring~\cite{schneider2000enforceable}; Polymer composes runtime
policies~\cite{bauer2005polymer}; UCON models ongoing authorization,
obligations, conditions, and mutable attributes~\cite{park2004ucon}. Cedar
provides expressive, analyzable authorization with a verified core
~\cite{cedar2024}. ClosureBound is a specialized policy/state model built on
these foundations.

FORGE provides recursive Datalog policies, causal-history predicates, an
assume/guarantee observability service, and a reference monitor
~\cite{palumbo2026forge}. AIRGuard provides pre-action runtime authority control
over normalized heterogeneous tool actions~\cite{qin2026airguard}. Atomix uses
epochs and resource frontiers for transactional release of tool effects, while
Cordon validates result lineage, delegated authority, staged state, and pending
effects before commit~\cite{mohammadi2026atomix,chen2026cordon}. Temporary
Authority, Permanent Effects most directly establishes freshness, causal
priority, effect binding, and eligibility at the durability boundary
~\cite{santos2026temporary}. ClosureBound's contribution is the
Agent-Skill-specific conjunction of an exact
recursive heterogeneous artifact identity and cross-surface realized-effect
equivalence bound to that current durability decision.

\subsection{Relationship to public prior team work}

Three publicly accessible team papers provide adjacent design context.
OpenPort supplies protocol and governed-effect vocabulary; IGAC supplies an
independently derived current-task authorization certificate; and EBTE
separates model-authored explanations from server-verified action claims
~\cite{zhu2026openport,zhu2026igac,zhu2026ebte}. Table~\ref{tab:internal}
makes the reuse boundary explicit.

\begin{table}[t]
\caption{Public team-work reuse boundary. Prior observations and denominators
are not ClosureBound evidence.}
\label{tab:internal}
\scriptsize
\begin{tabularx}{\linewidth}{@{}L{0.19\linewidth}L{0.32\linewidth}Y@{}}
\toprule
Work & Adjacent public abstraction & Not re-claimed here \\
\midrule
OpenPort & Protocol facts, policy/effect vocabulary, conformance discipline & Protocol novelty or published runtime evidence \\
IGAC & Authenticated current-task certificate & Intent inference and IGAC empirical results \\
EBTE & Server-owned claims and digest-bound evidence & Explanation verification novelty and observations \\
\bottomrule
\end{tabularx}
\end{table}

ClosureBound is a distinct identity/effect unit over this public context. No
non-public manuscript contributes a premise, artifact row, model run, bounded
state, or empirical denominator. The 40-case, mutation, bounded-model,
performance, and external-corpus results reported in
Section~\ref{sec:evaluation} are generated solely by this paper's reported
evaluation configuration and internally verified receipts.

\section{Discussion}
\label{sec:discussion}

\subsection{Security value and residual risk}

ClosureBound changes the default unit of trust from ``Skill \(s\) was once
approved'' to ``this exact current recursive capability is authorized to
produce this exact class of effect now.'' It is valuable against silent
dependency substitution, service/schema drift, load-to-use races, path
laundering, and stale authorization replay. Content benignness, provider honesty
under weak evidence, and unmediated channels remain separate security controls.

\subsection{Completeness is a deployment obligation}

Assumption~\ref{ass:closure} is intentionally demanding. Agent Skills allow
arbitrary prose and scripts, and measurement shows heterogeneous hidden
dependencies~\cite{jia2026islands}. A deployment should expose a completeness
frontier rather than a Boolean marketing claim:

\[
\mathsf{coverage}=
\langle nodeKinds,sinks,dynamicLoads,evidenceClasses,unknowns\rangle .
\]

High-risk policy may require ``no unknown material dynamic loads'' and
attested service deployments. Lower-risk tasks may accept a pinned container
as a conservative enclosing identity. An unversioned service URL is not silently
upgraded to strong evidence.

\subsection{Availability and update ergonomics}

Exact non-inheritance means updates cause reauthorization. This is a security
property and an availability cost. Typed contract/evidence diffs, reusable
subgraph roots, risk-tiered evidence profiles, and policy-authorized version
ranges can reduce friction. A range is itself part of the grant and must be
evaluated against a trusted compatibility/effect contract; \mcode{@latest} is
not an exact range.

Automatic renewal is safe only when a trusted policy already authorizes the
new denotation and the new grant binds the new root. Calling it ``same grant''
would hide the authority transition.

\subsection{Remote services and settlement}

Three service identities should not be confused:

\begin{enumerate}
  \item endpoint/audience identity;
  \item declared schema/effect-contract identity; and
  \item deployed implementation identity.
\end{enumerate}

Many services expose only the first two. ClosureBound can prevent token
audience confusion and schema drift, consistent with MCP authorization
guidance~\cite{mcp2026auth}. Deployed implementation identity requires
independent evidence. Under MCP
2026-07-28, stateless per-request protocol context, server discovery, issuer
binding, and cache TTL/scope become explicit evidence inputs, not substitutes
for deployment identity or operation-time authorization
~\cite{mcp2026changelog}. Confidential or
irreversible profiles should require deployment attestation, a constrained
server-side transaction contract, or human confirmation.

An authenticated receiver receipt establishes only the receiver-reported
state unless an independent postcondition or receiver-soundness premise is
available. A lost response after possible durability is indeterminate, not
safe to repeat automatically.

\subsection{Effect adapters as a governance API}

The adapter registry is a major TCB surface. Its contracts should be versioned,
reviewed, fuzzed, and included as closure nodes. A general command adapter can
often predict filesystem/network effects through a sandbox plan, but commands
with opaque dynamic behavior may require a coarse ceiling such as ``may write
any file under directory \(d\) and contact domain set \(R\).'' Coarse
normalization preserves soundness at the cost of false denial.

\subsection{Policy and legal interpretation}

Purpose, recipient, and data-class labels are policy inputs. ClosureBound's
theorems cover conditional enforcement; latent intent, legal validity, and
social desirability require separate human and institutional judgments.
Ambiguity should produce clarification or review rather than be resolved by
Skill prose.

\subsection{Ethics and dual use}

The evaluation uses synthetic identifiers, hashes, and state transitions; the
preprint does not publish live malicious Skills, credentials, or exploit payloads.
Closure and effect traces can reveal sensitive architecture, so production
telemetry should minimize payloads, separate authorization evidence from
content, apply retention limits, and enforce tenant isolation.

\section{Limitations and Future Work}
\label{sec:limitations}

The current work has six principal limitations.

\begin{enumerate}
  \item \textbf{Conditional completeness.} The proofs assume a sound
  dependency frontier and protected-sink coverage. Natural-language and
  reflective dynamic dependencies require deployment-specific instrumentation
  and conservative fallback.
  \item \textbf{Finite effect semantics.} Path invariance covers registered
  adapters, not arbitrary code equivalence or covert channels.
  \item \textbf{Service evidence.} A stable endpoint/contract cannot establish
  unchanged remote behavior.
  \item \textbf{Finite and static evidence.} Authored fixtures and bounded
  exploration establish mechanism behavior under the evaluated profiles. The
  two-snapshot external corpus contributes independently authored inputs and
  lexical structural evidence; cross-runtime efficacy requires execution-based
  adapter evaluation.
  \item \textbf{Single-process cost.} The reported microbenchmark excludes
  network, storage, signatures, attestations, distributed ordering, and model
  latency.
  \item \textbf{No receiver theorem.} The monitor authorizes dispatch and
  conditionally interprets settlement; it does not guarantee exactly-once
  external effects or compensation.
\end{enumerate}

The highest-value next studies are two independent runtime adapters, a broader
probability-sampled Skill corpus with runtime-observed exact-resolution
coverage, a matched
translation of AgentTrap or MalSkillBench to effect contracts, a distributed
fencing prototype, and an independently encoded TLA+/Alloy/Lean model. Together,
these studies would extend the present mechanism evidence into cross-runtime,
ecosystem, distributed-systems, and independently encoded formal validation.

\section{Conclusion}
\label{sec:conclusion}

Agent Skills turn procedural text and software into capabilities that are
resolved progressively and may change transitively. Authorizing the root name,
directory, manifest, or load-time artifact is therefore not enough to
authorize a later durable effect. ClosureBound binds a trusted task grant to
the exact current heterogeneous dependency closure and the normalized real
effect in one operation-time witness. Under explicit completeness, mediation,
freshness, cryptographic, and linearization assumptions, material updates do
not inherit authority, represented effects do not amplify transitive
privilege, distinguishable bound-value mismatches at commit fail closed, and
supported equivalent paths are checked against the same effective policy
denotation and cannot enlarge authority by path choice. The design
complements---rather than
relabels---Skill scanning, permissions, verified loading, behavioral
monitoring, SBOM/provenance, and general commit-time authorization. The external
snapshot audit provides structural motivation for conservative local closure in
current public Skill distributions and a concrete baseline for validating
runtime completeness across adapters and deployments.

\appendix

\section{Notation and Trust Summary}
\label{app:notation}

\begin{table}[ht]
\caption{Core notation.}
\small
\begin{tabularx}{\linewidth}{@{}L{0.2\linewidth}Y@{}}
\toprule
Symbol & Meaning \\
\midrule
\(\Graph_t(s;\pi,c_\pi,B)\) & Typed transitive capability graph under profile/canonicalizer \((\pi,c_\pi)\) and candidate effects \(B\) \\
\(q_t(v)\) & Canonical authenticated evidence record for node \(v\) \\
\(\rho_t(s;\pi,c_\pi,B)\) & Root-, context-, and endpoint-sensitive SCC/Merkle graph commitment \\
\(\gamma\) & Trusted correlated task/closure/effect grant \\
\(\Norm(x,S)\) & Partial trusted surface-to-effect normalizer \\
\(A_{\mathrm{cur}}(t,e)\) & Current task, closure, state, and purpose/provenance meet \\
\(W_{t_p}\) & Immutable prepared joint witness and reservation binding \\
\(\equiv_\Omega\) & External-effect equivalence under observation map \(\Omega\) \\
\bottomrule
\end{tabularx}
\end{table}

\begin{table}[ht]
\caption{Trusted and untrusted inputs.}
\small
\begin{tabularx}{\linewidth}{@{}L{0.28\linewidth}YY@{}}
\toprule
Input & Trust treatment & Failure behavior \\
\midrule
Skill name/description/prose & Untrusted discovery/instruction data & Cannot mint authority \\
Skill manifest/lockfile & Evidence only; authenticated if possible & Missing/unknown material edge denies exact profile \\
Supply-chain attestation & Authenticated identity/provenance evidence & Invalid/stale evidence denies \\
Task/purpose certificate & Trusted issuer input & Missing/ambiguous denies or asks for clarification \\
Tool/service schema & Versioned closure/effect-contract node & Drift requires new grant \\
Effect adapter & TCB, versioned and closure-bound & Unknown/ambiguous normalization denies \\
Receiver receipt & Authenticated receiver report unless stronger premise & Uncertain external state remains indeterminate \\
\bottomrule
\end{tabularx}
\end{table}

\section{Proof Detail}
\label{app:proofs}

\subsection{Correlated clauses}

Let a finite effect clause be
\[
\begin{aligned}
c=\langle&
taskAtom,purpose,recipient,op,target,dataClass,amountCeiling,\\
&argsDigest,postconditionContract,effectContractVersion,\\
&durability,idempotencyScope\rangle .
\end{aligned}
\]
Its denotation is the set of normalized effects satisfying every field
together. The effect's authoritative \(\mcode{preStateDigest}\) is bound
separately by (A5), while grant validity and epoch maps are bound by (A1)--(A3).
The formal clause language permits exact values, finite sets, prefix patterns,
and closed numeric intervals with an explicit bottom; this language is closed
under exact field intersection. The executable kernel instantiates exact-valued
fields and a nonnegative amount ceiling over already compiled correlated
clauses; it does not implement general policy intersection or dominance
elimination. A deployment using a richer language must retain the exact
symbolic conjunction or fail closed whenever an intersection is not
representable. For two policies \(C_1,C_2\), define
\[
C_1\otimes C_2=
\operatorname{Min}\{c_1\cap c_2:
c_1\in C_1,c_2\in C_2,\denote{c_1\cap c_2}\ne\emptyset\},
\]
where \(\operatorname{Min}\) removes exact duplicates and clauses whose
denotation is contained in another retained clause with identical authority
provenance.

\begin{lemma}[Relational meet soundness]
\label{lem:relmeet}
\[
\denote{C_1\otimes C_2}=\denote{C_1}\cap\denote{C_2}.
\]
\end{lemma}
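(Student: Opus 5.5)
The proof is a double inclusion over denotations, using three facts: the denotation of a policy (a finite set of clauses) is the union of its clause denotations, so \(\denote{C}=\bigcup_{c\in C}\denote{c}\); each clause denotes a conjunction of field constraints; and the clause language is closed under exact field intersection, with \(\denote{c_1\cap c_2}=\denote{c_1}\cap\denote{c_2}\). The last fact holds because every field constraint (exact value, finite set, prefix pattern, closed interval with explicit bottom) has a representable intersection whose denotation is exactly the intersection of the two field denotations. A normalized effect lies in \(\denote{c_1}\cap\denote{c_2}\) iff every field satisfies both constraints, which is field-wise membership in the intersected constraint. I would state this as a short preliminary claim, checked per constraint type. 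The prefix case is the only mildly delicate one: two prefix patterns intersect to the longer prefix if one extends the other, and to bottom otherwise.

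With the claim in hand, distribute intersection over the finite unions:
\[
\denote{C_1}\cap\denote{C_2}=\bigcup_{c_1\in C_1,\,c_2\in C_2}\bigl(\denote{c_1}\cap\denote{c_2}\bigr)=\bigcup_{c_1,c_2}\denote{c_1\cap c_2}.
\]
This equals the denotation of the unpruned product set \(P=\{c_1\cap c_2\}\). It then remains to show that the two filters in the definition of \(\otimes\) do not change the union. Dropping pairs with \(\denote{c_1\cap c_2}=\emptyset\) removes only empty sets from the union. \(\operatorname{Min}\) removes exact duplicates, which leaves the union unchanged. It also removes clauses whose denotation is contained in some retained clause, and such a clause contributes nothing new to the union.

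The main obstacle is making the \(\operatorname{Min}\) step rigorous. Each removed clause must be dominated by a clause that is actually \emph{retained}, not by one that is itself later removed. I would handle this by viewing \(\operatorname{Min}\) as selecting, within each provenance class, the maximal elements under denotational inclusion of a finite set, after first quotienting by exact duplicates so that mutually containing distinct clauses are resolved deterministically. Finiteness guarantees that every clause lies below some maximal retained clause, because any ascending chain of strict inclusions terminates. Restricting domination to identical authority provenance only makes \(\operatorname{Min}\) remove fewer clauses, so it preserves equality. Combining the distribution identity with this union-preservation argument gives \(\denote{C_1\otimes C_2}=\denote{C_1}\cap\denote{C_2}\).
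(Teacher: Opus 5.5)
Your proposal is correct and takes essentially the paper's approach: both inclusions rest on each nonempty pairwise intersection denoting exactly the intersection of its two parent clauses, and on \(\operatorname{Min}\) discarding only clauses covered by a retained dominating clause. You also make explicit what the paper leaves implicit, namely the per-field exactness of clause intersection and the finiteness argument that every discarded clause lies below one that is actually retained (keeping all maximal elements under strict inclusion already suffices here, so your duplicate-quotienting tie-break affects only canonical output, not soundness).
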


\begin{proof}
If \(e\) is in the left side, it satisfies some nonempty pairwise intersection
and therefore both parent clauses. If \(e\) is in both parent denotations, it
satisfies some \(c_1\in C_1\) and \(c_2\in C_2\), hence their nonempty
intersection before dominance elimination. Removing a subset clause does not
remove \(e\) because a retained dominating clause contains it.
\end{proof}

Field-wise union lacks this result because it can construct tuples present in
neither parent relation. This is the clause-splicing counterexample in the
artifact.

\subsection{Induction over nested dependencies}

Fix a normalized effect \(e\) and its authenticated influence subgraph
\(\mathcal I_{\Graph}(e)\). Let \(P(X)\) state that every mediated realization
of \(e\) reachable through SCC \(X\) is bounded by \(\gamma.A\) and by the meet
of every applicable contract on all \(X\)-to-sink influence paths. SCC
condensation gives a DAG. At a sink component, the trusted adapter normalizes
the effect and (A4) checks its local applicable contracts, so \(P(X)\) holds.
Assume \(P\) for every successor component. The current component's exact
outgoing edges, original endpoints, successor components, and applicable
contract nodes are committed by \(\rho\); the compiler forms their relational
meet without field splicing, and the same commit predicate mediates each
dispatch. Thus the reachable effect set is a subset of the accumulated meet,
establishing \(P(X)\). Induction over the condensation DAG proves
Theorem~\ref{thm:nonamplification}; an internal cycle is treated as one SCC and
all of its applicable contracts enter the same meet.

\section{Evaluation Configuration and Availability}
\label{app:artifact}

The evaluated snapshot uses the following disclosed configuration:

\begin{description}[leftmargin=1.25em,style=nextline]
  \item[Mechanism evidence.]
  Forty represented lifecycle fixtures, 18 kernel contracts, six mechanism
  profiles (one full and five reduced), and six single-point semantic mutants.
  \item[Bounded exploration.]
  Breadth-first enumeration to a fixed point with no depth cutoff over the
  finite state variables specified in Section~\ref{sec:evaluation}; the full
  profile reaches 84,608 states and 530,752 transitions.
  \item[Local cost.]
  A single-process, in-memory measurement on the disclosed Apple M4 Max host,
  excluding network, resolution I/O, signatures, durable transactions, model
  inference, and settlement.
  \item[External structure.]
  A lexical audit of the Anthropic Skills snapshot at commit
  \mcode{3b3fad96} and the OpenAI Plugins snapshot at commit
  \mcode{1e285826}, using the inclusion and path-resolution rules stated in
  Section~\ref{sec:evaluation}.
\end{description}

The arXiv source bundle contains the manuscript and bibliography; the evaluated
implementation, private source layout, machine-readable traces, and internal
verification commands are retained internally. The manuscript discloses the
methods, denominators, aggregate results, and limitations, while independent
execution requires access to the reviewed implementation beyond that bundle.

\section{Claim Checklist}
\label{app:claims}

\begin{itemize}
  \item Every ``all,'' ``zero,'' and state count refers only to its declared
  finite fixture, mutation, or bounded domain.
  \item Mechanism profiles are not reported as measurements of cited systems.
  \item Preprints are cited as contemporaneous work.
  \item Exact identity does not imply benignness or behavioral equivalence.
  \item Commit authorization is not receiver truth or exactly-once settlement.
  \item Unresolved dependencies and effects fail closed; they are not omitted.
  \item The two-source static audit supports repository-structure claims;
  cross-runtime utility, ecosystem coverage, and production overhead require
  separate evaluation.
\end{itemize}

\bibliographystyle{ACM-Reference-Format}
\setlength{\bibsep}{-0.2pt}
\bibliography{references}

\end{document}